\documentclass[conference, onecolumn]{IEEEtran}
\usepackage{graphicx} 
\usepackage[utf8]{inputenc}
\usepackage[a4paper, margin=0.5in]{geometry}

\usepackage{cite, tikz}
\usetikzlibrary{positioning, arrows.meta}
\usepackage{tikz-cd}
\tikzstyle{box}=[rectangle, draw, text centered]
\tikzstyle{line} = [draw, -latex']
\usepackage{pgfplots}
\usepackage{amsmath,amssymb,amsfonts}
\usepackage{algorithmic}
\usepackage{textcomp}
\usepackage{xcolor}
\usepackage{subcaption}

\def\BibTeX{{\rm B\kern-.05em{\sc i\kern-.025em b}\kern-.08em
    T\kern-.1667em\lower.7ex\hbox{E}\kern-.125emX}}
\usepackage{amsmath,amssymb,amsthm}
\usepackage{tikz,enumerate}
\usepackage{cite,mathtools}
\theoremstyle{plain}

\newtheorem{theorem}{Theorem} 
\newtheorem{proposition}{Proposition}

\newtheorem{lemma}{Lemma}

\theoremstyle{remark}

\newtheorem{remark}{Remark}

\theoremstyle{definition}

\allowdisplaybreaks

\title{An Improved Upper Bound on the Capacity of the Primitive Diamond Relay Channel}
\date{\today}
\IEEEoverridecommandlockouts

\author{
\IEEEauthorblockN{Deniz G\"und\"uz}
\IEEEauthorblockA{
\textit{Department of Electrical and}\\
\textit{Electronic Engineering}\\
\textit{Imperial College London}\\
London, UK\\
\texttt{d.gunduz@imperial.ac.uk}
}
\and
\IEEEauthorblockN{Yanxiao Liu}
\IEEEauthorblockA{
\textit{Department of Electrical and}\\
\textit{Electronic Engineering}\\
\textit{Imperial College London}\\
London, UK\\
\texttt{y.liu2@imperial.ac.uk}
}
\and
\IEEEauthorblockN{Yi Liu}
\IEEEauthorblockA{
\raisebox{-0.5\baselineskip}[0pt][0pt]{%
\textit{Department of Information Engineering}}\\
\\
\textit{The Chinese University of Hong Kong}\\
Hong Kong, China\\
\texttt{ly023@ie.cuhk.edu.hk}
}
\thanks{The authors are listed in alphabetical order.}
}
\begin{document}
\maketitle

\begin{abstract}
    We study the capacity of the primitive diamond relay channel and derive an improved upper bound compared with existing upper bounds by employing the Csisz\'ar--K\"orner--Marton sum identity and Gallager-type identification.
    For Gaussian primitive diamond relay channels, we show that our proposed bound coincides with the existing strengthened Gaussian upper bound of Wu, \"Ozg\"ur, Peleg, and Shamai.
    For discrete memoryless channels, our bound can be strictly tighter than both the cut-set bound and the existing upper bound, and we assess the improvement using different channel instances, together with a decode--forward/compress--forward time-sharing inner bound.  
\end{abstract}

\section{Introduction}

Relay networks provide a fundamental model for communication systems in which geographically separated terminals cooperate to convey information from a source to a destination. The diamond relay network was studied in~\cite{schein2001distributed}. A relatively simple yet nontrivial variant is the primitive diamond relay channel. 
In this model, a source transmits over a broadcast channel to two relays, each of which communicates with the destination through an orthogonal, error-free link of finite capacity. This model captures the tension between distributed reception at the relays and limited-rate communication toward a common destination. It also serves as a useful abstraction for distributed antenna systems, cloud radio access networks, and communication networks with capacity-limited fronthaul~\cite{SanderovichShamaiSteinbergKramer2008,SanderovichShamaiSteinberg2009,SanderovichSomekhPoorShamai2009,park2014fronthaul,wu2015cloud,aguerri2019capacity,liu2026nonasymptotic}. Previous work has focused on the complementary diamond setting, in which the source-to-relay links are orthogonal finite-capacity links and the relay-to-destination channel is a multiple-access channel, and has established inner and outer bounds for this model; see, e.g., \cite{kang2015gaussian,bidokhti2016capacity,GK26}. In contrast, the study of the capacity of the present setting was initiated in~\cite{wu2019new}, with subsequent achievability results developed in~\cite{aoe25,gdy25,gdy26}.

\subsection{Channel Model}
Consider the primitive diamond relay communication system depicted in Fig.~\ref{fig:1}. A transmitter wishes to communicate reliably with a receiver through a two-stage network. The first stage is a broadcast channel $T_{Y_1,Y_2|X}$ from the transmitter to the two relay nodes. The second stage consists of two orthogonal, noiseless links of capacities $C_1$ and $C_2$ from the respective relays to the receiver.

We assume that the broadcast channel is memoryless and of the product-form: $$T(y_1,y_2|x) = T(y_1|x)T(y_2|x).$$

For a blocklength-$n$ code, relay $i$ observes $Y_i^n$ and produces an index
\[
    J_{i,n}=\phi_{i,n}(Y_i^n)
    \in\left[1:2^{nC_i}\right],
    \qquad i\in\{1,2\}.
\]
The receiver estimates the source message from $(J_{1,n},J_{2,n})$. A detailed description of this model can also be found in~\cite{wu2019new}.

\begin{figure}[h!]
    \centering
    \begin{tikzpicture}[scale=1, every node/.style={scale=0.74}]
\node at (0.4,-0.6) {$M$}; 
\draw [->,thick] (0,-0.8) -- (0.8,-0.8);
\draw (0.8, -0.4) rectangle +(1.6,-0.8); \node at (1.6,-0.75) {Transmitter};
\draw [->,thick] (2.4,-0.8)-- (3.4,-0.8); \node at (2.9, -0.6) {$ X^n$};
\draw (3.4, -0.4) rectangle +(1.6,-0.8); \node at (4.2,-0.8) {$T_{Y_1,Y_2|X}$};
\draw [->,thick] (5.0, -0.8) --(6.0, 0); \node at (5.5,0) {$Y_1^n$};
\draw [->,thick] (5.0, -0.8) --(6.0, -1.6); \node at (5.5,-1.6) {$Y_2^n$};
\draw (6.0, 0.4) rectangle +(1.6, -0.8); \node at (6.8,0) {Relay Node $1$};
\draw (6.0, -1.2) rectangle +(1.6, -0.8); \node at (6.8,-1.6) {Relay Node $2$};
\draw [->,thick] (7.6, 0) --(8.8,-0.8); \node at (8.2, 0) {$J_{1,n}/C_1$};
\draw [->,thick] (7.6, -1.6) --(8.8,-0.8); \node at (8.2, -1.6) {$J_{2,n}/C_2$};
\draw (8.8, -0.4) rectangle +(1.6,-0.8); \node at (9.6,-0.75) {Receiver};
\end{tikzpicture}
    \caption{Primitive diamond relay communication system.}
    \label{fig:1}
\end{figure}
For the scalar Gaussian primitive diamond relay channel, the source-to-relay broadcast channel is
\begin{align*}
        Y_1 =& X+Z_1,\\
        Y_2 =& X+Z_2,
\end{align*}
where
$$Z_1\sim \mathcal{N}(0,N_1),\qquad Z_2\sim \mathcal{N}(0,N_2).$$
The noises $Z_1$ and $Z_2$ are mutually independent and are independent of $X$. The channel input is subject to the average power constraint $\mathbb E[X^2]\leq P$.

\subsection{Upper and Lower Bounds on the Capacity}
The capacity of the primitive diamond relay channel remains unknown in general.
The cut-set bound provides a natural upper bound.
\begin{theorem}{(Cut-set Bound)}
    Any achievable rate $R$ for a primitive diamond relay channel must satisfy
    \begin{subequations}
        \begin{align}
            R\leq &I(X;Y_1,Y_2),\\
            R\leq &C_1+I(X;Y_2),\\
            R\leq &C_2+I(X;Y_1),\\
            R\leq &C_1+C_2,
        \end{align}
    \end{subequations}
    for some input distribution $p(x)$.
\end{theorem}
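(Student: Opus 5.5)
We will prove the four inequalities by the standard cut-set argument: a Fano step followed by single-letterization through a time-sharing variable. Fix a sequence of codes of rate $R$ with error probability $P_e^{(n)}\to 0$. Let $M$ be uniform on $[1:2^{nR}]$. Since the receiver's estimate is a function of $(J_{1,n},J_{2,n})$, Fano's inequality gives
\[
nR \le I(M;J_{1,n},J_{2,n}) + n\epsilon_n, \qquad \epsilon_n\to 0 .
\]
Each bound then comes from upper-bounding $I(M;J_{1,n},J_{2,n})$ along a different cut.

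\textbf{Cut separating the source from both relays.} The chain $M \to X^n \to (Y_1^n,Y_2^n)\to (J_{1,n},J_{2,n})$ is Markov. Data processing therefore gives
\[
I(M;J_{1,n},J_{2,n}) \le I(X^n;Y_1^n,Y_2^n) \le \sum_{i=1}^n I(X_i;Y_{1i},Y_{2i}).
\]
The last step uses memorylessness of the broadcast channel, with the usual bound $H(Y^n|X^n)=\sum_i H(Y_i|X_i)$ and $H(Y^n)\le\sum_i H(Y_i)$.

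\textbf{Cut separating the source from both destination links.} Since $J_{i,n}$ takes at most $2^{nC_i}$ values,
\[
I(M;J_{1,n},J_{2,n}) \le H(J_{1,n})+H(J_{2,n}) \le n(C_1+C_2).
\]

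\textbf{Mixed cuts.} For the bound $C_1+I(X;Y_2)$, apply the chain rule:
\[
I(M;J_{1,n},J_{2,n}) = I(M;J_{2,n}) + I(M;J_{1,n}\mid J_{2,n}).
\]
The second term is at most $H(J_{1,n})\le nC_1$. For the first term, data processing along $M\to X^n\to Y_2^n\to J_{2,n}$ gives $I(M;J_{2,n})\le I(X^n;Y_2^n)\le \sum_i I(X_i;Y_{2i})$. The bound $C_2+I(X;Y_1)$ follows by symmetry.

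\textbf{Single-letterization.} Introduce $Q$ uniform on $[1:n]$, independent of everything else, and set $X=X_Q$, $Y_j=Y_{jQ}$. Memorylessness makes $p(y_1,y_2\mid x)=T(y_1,y_2\mid x)$ and makes $Q\to X\to(Y_1,Y_2)$ Markov. Hence
\[
\frac1n\sum_i I(X_i;Y_{1i},Y_{2i}) = I(X;Y_1,Y_2\mid Q)\le I(X;Y_1,Y_2),
\]
and similarly for the terms in $Y_1$ and $Y_2$ alone.

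This step needs care. All four bounds must hold for the \emph{same} input distribution $p(x)=\frac1n\sum_i p_{X_i}$. That is automatic here, because all three mutual-information bounds are taken under this one averaged input. The remaining concern is letting $n\to\infty$. The four expressions are continuous in $p(x)$ (for finite alphabets), and the probability simplex is compact. We can therefore pass to a limit point of the averaged input distributions, which gives a single $p(x)$ satisfying all four inequalities.

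For the Gaussian case, note that the averaged input satisfies $\mathbb E[X^2]\le P$. The same argument then goes through with the maximization taken over inputs satisfying this power constraint, using Gaussian optimality.

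I expect no serious obstacle. The only subtle points are the common-$p(x)$/limit argument just described and making sure that conditioning on $J_{2,n}$ is handled through the cardinality bound rather than through data processing.
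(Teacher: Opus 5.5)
Your proof is correct: it is the textbook cut-set converse, and you handle the two points that need care, namely a single averaged input $p(x)$ shared by all four bounds and the limit $n\to\infty$ via compactness of the simplex. The paper gives no proof of this theorem; it quotes it as known. Within the paper it is recovered as a corollary of Theorem~\ref{ob:relayuvw}, since in the rewritten form of the subsequent Remark each constraint is a cut-set term minus nonnegative mutual informations.

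The two routes differ at the mixed and sum cuts.
\begin{itemize}
\item You split $I(M;J_{1,n},J_{2,n})=I(M;J_{2,n})+I(M;J_{1,n}\mid J_{2,n})$ and bound the second term by $H(J_{1,n})\le nC_1$. This discards everything about how $J_{1,n}$ depends on $Y_1^n$.
\item The paper instead gives $Y_2^n$ to the receiver and bounds $I(M;J_{1,n},J_{2,n})\le I(X^n;J_{1,n},Y_2^n)$. It keeps $-I(J_{1,n};Y_1^n\mid X^n)=-H(J_{1,n}\mid X^n)$. After the Csisz\'ar--K\"orner--Marton identity and the identification $U_i=(Y_{2,i+1}^n,Y_1^{i-1},J_{1,n})$, this term becomes $-I(U;Y_1\mid X)$.
\item Likewise, the paper's sum cut retains $-H(J_{1,n}\mid X^n)-H(J_{2,n}\mid X^n)$.
\item Even its $I(X;Y_1,Y_2)$-type constraint comes from the genie-aided quantity $I(X^n;J_{1,n},Y_2^n)$, not from $I(X^n;Y_1^n,Y_2^n)$.
\end{itemize}
Your argument is shorter and needs no auxiliaries, but it can only produce the cut-set bound. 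The paper's argument yields the cut-set bound for free, together with the strict improvement.

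On your Gaussian aside: the compactness and continuity step would need extra justification for real-valued inputs, but you do not need it. Each of the four right-hand sides is maximized by $\mathcal N(0,P)$ simultaneously. So for every $n$ you can bound the values at the averaged input by those at $p(x)=\mathcal N(0,P)$ and take that as the common input.
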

The cut-set bound is known to be not tight in general, since the channel setting permits unrestricted cooperation among the terminals on either side of a cut. Hence, a stronger converse should account more carefully for the dependence among the source input, the relay observations, and the descriptions transmitted by the relays.

The primitive diamond relay channel can be viewed as a two-relay extension of the primitive relay channel. For the latter in the Gaussian case, Wu, Barnes, and Özgür developed a converse based on a new isoperimetric inequality~\cite{wbo19,wbo17}; a simple proof for the Gaussian case based on information constrained optimal-transport was subsequently given in~\cite{bwo23}. Building on this approach, Wu, Özgür, Peleg, and Shamai derived upper bounds for both Gaussian and discrete memoryless primitive diamond relay channels that improve upon the cut-set bound~\cite{wu2019new}.

\begin{theorem}\cite{wu2019new}[Theorem 3.1]\label{ob:ayfergaussian}
    For a Gaussian primitive diamond relay channel, if a rate $R$ is achievable, then there exists $\theta_i\in[\arcsin(2^{-C_i}),\frac{\pi}{2}],i\in\{1,2\}$ such that
    \begin{subequations}
        \begin{align}
            R\leq& \frac{1}{2}\log(1+\frac{P}{N_1})+C_2+\log\sin\theta_2,\label{eqn:ayfergauss1}\\
            R\leq& \frac{1}{2}\log(1+\frac{P}{N_2})+\min_{\omega_1\in (\frac{\pi}{2}-\theta_1,\frac{\pi}{2}]}h_1(\omega_1;\theta_1),\label{eqn:ayfergauss2}\\
            R\leq& \frac{1}{2}\log(1+\frac{P}{N_2})+C_1+\log\sin\theta_1,\label{eqn:ayfergauss3}\\
            R\leq& \frac{1}{2}\log(1+\frac{P}{N_1})+\min_{\omega_2\in (\frac{\pi}{2}-\theta_2,\frac{\pi}{2}]}h_2(\omega_2;\theta_2),\label{eqn:ayfergauss4}\\
            R\leq & C_1+C_2+\log \sin\theta_1+\log\sin\theta_2,\label{eqn:ayfergauss5}
        \end{align}
        where 
        \begin{align}
            h_i(\omega;\theta):=\frac{1}{2}\log\bigg(\frac{[P(N_1+N_2)+N_1N_2\sin^2\omega -2P\sqrt{N_1 N_2}\cos \omega]\sin^2\theta}{N_i(P+N_{\bar i})(\sin^2\theta -\cos^2\omega)}\bigg).
        \end{align}
    \end{subequations}
\end{theorem}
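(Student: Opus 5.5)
The plan is to follow the Wu--Özgür--Peleg--Shamai route: Fano's inequality, a standard multi-letter expansion, and then a Gaussian geometric (isoperimetric / optimal-transport) inequality that limits how much a rate-$C_i$ description of $Y_i^n$ can reveal about $X^n$ and about the other relay's output. Fano gives
\[
nR \le I(X^n;J_1,J_2)+n\epsilon_n .
\]
For each $i$ I define the angle $\theta_i$ through the residual entropy of the description given the input,
\[
H(J_i\mid X^n)=n\bigl(C_i+\log\sin\theta_i\bigr)
\]
(more precisely, through $I(Y_i^n;J_i\mid X^n)$, after a harmless normalization so that $H(J_i)\approx nC_i$). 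Since $0\le H(J_i\mid X^n)\le nC_i$, this yields $\sin\theta_i\in[2^{-C_i},1]$, which is exactly the admissible range in the statement.

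The easy bounds come next.
\begin{itemize}
\item Bound \eqref{eqn:ayfergauss5}: write $I(X^n;J_1,J_2)\le H(J_1)+H(J_2)-H(J_1,J_2\mid X^n)$. Because $J_1$ and $J_2$ are conditionally independent given $X^n$ (product channel), the last term equals $H(J_1\mid X^n)+H(J_2\mid X^n)$. Substituting the definition of $\theta_i$ gives \eqref{eqn:ayfergauss5}.
\item Bounds \eqref{eqn:ayfergauss1} and \eqref{eqn:ayfergauss3}: split $I(X^n;J_1,J_2)\le I(X^n;Y_1^n)+I(X^n;J_2\mid J_1)$. The same conditional independence gives $I(X^n;J_2\mid J_1)\le H(J_2)-H(J_2\mid X^n)\le n(C_2+\log\sin\theta_2)$. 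Combined with $I(X^n;Y_1^n)\le \frac n2\log(1+P/N_1)$ this gives \eqref{eqn:ayfergauss1}, and \eqref{eqn:ayfergauss3} follows symmetrically.
\end{itemize}

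The substantive part is \eqref{eqn:ayfergauss2} and \eqref{eqn:ayfergauss4}. Here I would write
\[
I(X^n;J_1,J_2)\le I(X^n;Y_2^n)+I(X^n;J_1\mid Y_2^n)
\]
and bound $I(X^n;J_1\mid Y_2^n)=H(J_1\mid Y_2^n)-H(J_1\mid X^n)$. Two inputs are needed:
\begin{itemize}
\item The second term equals $n(C_1+\log\sin\theta_1)$ by the definition of $\theta_1$.
\item The first term requires an upper bound on $H(J_1\mid Y_2^n)$, showing that a description of $Y_1^n$ with small residual entropy given $X^n$ cannot be much less uncertain given the noisier correlated observation $Y_2^n$.
\end{itemize}
This is where I would invoke the Gaussian isoperimetric/optimal-transport result of~\cite{wbo19,bwo23}. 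Apply it to the jointly Gaussian pair $(Y_1^n, Y_2^n)$, whose correlation angle $\omega$ is determined by $P$, $N_1$, $N_2$. The result, roughly: if $\tfrac1n H(J_1\mid X^n)=C_1+\log\sin\theta_1$, then $J_1$ essentially lies on a cap of $Y_1^n$-space of angular radius $\theta_1$ around $X^n$. Viewed from $Y_2^n$, such a cap spans at most an angle governed by $\omega_1$ with $\sin^2\theta_1-\cos^2\omega_1>0$, i.e.\ $\omega_1>\frac{\pi}{2}-\theta_1$. A volume computation in high dimensions then turns this cap geometry into the logarithmic expression $h_1(\omega_1;\theta_1)$.

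The auxiliary angle $\omega_1$ arises because the bound holds for every valid parametrization of the geometry, so one may take the minimum over $\omega_1$. The terms $P(N_1+N_2)$, $N_1N_2\sin^2\omega$ and $2P\sqrt{N_1N_2}\cos\omega$ in $h_1$ come from the law of cosines applied to the triangle formed by $X^n$, $Y_1^n$ and $Y_2^n$ on spheres of radii $\sqrt{nN_1}$, $\sqrt{nN_2}$ and $\sqrt{n(P+N_i)}$. Bound \eqref{eqn:ayfergauss4} follows by exchanging the roles of the two relays.

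The main obstacle is this final geometric step: the precise cap-intersection / strong data-processing inequality for $H(J_1\mid Y_2^n)$ and the verification that the resulting expression matches $h_1$ exactly. I would first try the information-constrained optimal-transport argument of~\cite{bwo23}. The plan is to couple $Y_1^n$ and $Y_2^n$ via $X^n$, bound $H(J_1\mid Y_2^n)$ by a Wasserstein/entropy tradeoff with the constraint parametrized by $\theta_1$, and optimize the resulting one-dimensional Gaussian problem, which produces the angle $\omega_1$. Finally, I would handle the power constraint via concavity and the usual $\epsilon$-typicality/limit arguments.
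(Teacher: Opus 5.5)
\textbf{The proposal has a genuine gap.} You follow the original geometric route of Wu--\"Ozg\"ur--Peleg--Shamai. Its easy part has a fixable sign error, and its hard part is not carried out.

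\textbf{The normalization of $\theta_i$ is inverted.} As written, the easy bounds do not follow. With $H(J_i\mid X^n)=n(C_i+\log\sin\theta_i)$ you get
\[
H(J_2)-H(J_2\mid X^n)\le nC_2-n(C_2+\log\sin\theta_2)=-n\log\sin\theta_2,
\]
not $n(C_2+\log\sin\theta_2)$. Likewise your chain for \eqref{eqn:ayfergauss5} ends at $R\le-\log\sin\theta_1-\log\sin\theta_2$ rather than $R\le C_1+C_2+\log\sin\theta_1+\log\sin\theta_2$.

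The correct identification is $H(J_i\mid X^n)=I(J_i;Y_i^n\mid X^n)=-n\log\sin\theta_i$. With it:
\begin{itemize}
\item $0\le H(J_i\mid X^n)\le H(J_i)\le nC_i$ gives exactly $\theta_i\in[\arcsin(2^{-C_i}),\frac{\pi}{2}]$;
\item no normalization $H(J_i)\approx nC_i$ is needed;
\item your derivations of \eqref{eqn:ayfergauss1}, \eqref{eqn:ayfergauss3} and \eqref{eqn:ayfergauss5} go through.
\end{itemize}
This is also the only choice consistent with your own cap picture, since a cap of angular radius $\theta$ has relative measure about $\sin^n\theta$. It is the multi-letter counterpart of the paper's $\sin\psi_1=2^{-I(U;Y_1|X)}$.

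\textbf{The key step for \eqref{eqn:ayfergauss2} and \eqref{eqn:ayfergauss4} is missing.} These are the only constraints that improve on the cut-set bound, and they are not proved. Along your decomposition you need an explicit inequality, valid for every code, of the form
\[
\tfrac1n\bigl[I(X^n;Y_2^n)+H(J_1\mid Y_2^n)\bigr]\le\tfrac12\log\bigl(1+\tfrac{P}{N_2}\bigr)-\log\sin\theta_1+h_1(\omega_1;\theta_1)
\]
for every $\omega_1\in(\frac{\pi}{2}-\theta_1,\frac{\pi}{2}]$. You neither state such an inequality nor show that the results of \cite{wbo19,bwo23} yield this particular $h_1$.

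The heuristic you give in its place also has a concrete flaw. For a general codebook, $(Y_1^n,Y_2^n)$ is a Gaussian mixture, not a jointly Gaussian pair. The geometry is Gaussian only conditionally on $X^n=x^n$, where $Y_1^n$ and $Y_2^n$ are independent. So the argument must be carried out around a typical codeword and then averaged, and that is exactly the hard part you defer.

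\textbf{How the paper avoids this.} The paper obtains the bound without any geometric inequality.
\begin{enumerate}
\item It single-letterizes $I(X^n;J_{1,n},Y_2^n)$ with the Csisz\'ar--K\"orner--Marton identity and $U_i=(Y_{2,i+1}^n,Y_1^{i-1},J_{1,n})$. This gives $R\le I(X;Y_1)+I(X;Y_2|U)-I(X;Y_1|U)$ alongside $R\le I(X;Y_2)+C_1-I(U;Y_1|X)$.
\item It proves that Gaussian $P_{X,Y_1|U}$ is optimal, using the doubling/rotation argument and Skitovich--Darmois.
\item In Lemma~\ref{lm:gaussianequivalence} it checks by direct calculus that maximizing the Gaussian evaluation over $K_1$, at fixed $\sin\psi_1=2^{-I(U;Y_1|X)}$, equals $\frac12\log(1+\frac{P}{N_2})+\min_{\omega_1}h_1(\omega_1;\psi_1)$.
\end{enumerate}
In that route, your minimum over $\omega_1$ reappears as a maximum over $K_1$. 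The sum identity together with Gaussian extremality plays the role of the isoperimetric inequality you would have to prove.
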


\begin{theorem}\cite{wu2019new}[Theorem 3.2]\label{ob:ayferdiscrete}
For a discrete memoryless primitive diamond relay channel, if a rate $R$ is achievable, then there exists some $p(x)$ and $a_1,a_2\geq 0$ such that
\begin{subequations}
    \begin{align}
        R\leq &I(X;Y_1,Y_2),\\
        R\leq &I(X;Y_1)+C_2-a_2,\\
        R\leq &I(X;Y_1,\tilde Y_2)+H_2(\sqrt{a_2\frac{\ln 2}{2}})+\sqrt{a_2\frac{\ln 2}{2}}\log(|\mathcal{Y}_2|-1)-a_2,\\
        R\leq & I(X;Y_2)+C_1-a_1,\\
        R\leq & I(X;\tilde Y_1,Y_2)+H_2(\sqrt{a_1\frac{\ln 2}{2}})+\sqrt{a_1\frac{\ln 2}{2}}\log(|\mathcal{Y}_1|-1)-a_1,\\
        R\leq & C_1+C_2-a_1-a_2,
    \end{align}
\end{subequations}
for any random variables $\tilde Y_1$ and $\tilde Y_2$ with the same conditional distributions as $Y_1$ and $Y_2$ given $X$.
\end{theorem}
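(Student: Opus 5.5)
The plan follows the converse strategy of \cite{wu2019new} for the discrete memoryless case: Fano's inequality, a single-letterization with a time-sharing variable, and a measure-concentration (blowing-up) step that controls the extra rate relay $i$ spends on noise. Fix a sequence of codes with vanishing error probability, let $\epsilon_n\to 0$, and define the slack parameters
\[
a_i:=\frac{1}{n}\,I(J_{i,n};Y_i^n\mid X^n)=\frac{1}{n}H(J_{i,n}\mid X^n),\qquad i\in\{1,2\}.
\]
The second equality holds because $J_{i,n}$ is a deterministic function of $Y_i^n$. These $a_i$ measure how much of relay $i$'s link carries noise rather than information about $X^n$, and they are clearly nonnegative. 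The input distribution $p(x)$ will be the time-shared empirical marginal of $X^n$, obtained through a uniform time index $Q$ in the standard way.

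The "easy" bounds come first.
\begin{itemize}
\item $nR\le I(X^n;J_1,J_2)+n\epsilon_n\le nI(X;Y_1,Y_2)+n\epsilon_n$ by data processing, memorylessness and concavity.
\item $nR\le I(X^n;Y_1^n,J_2)+n\epsilon_n$. Then $I(X^n;Y_1^n,J_2)\le I(X^n;Y_1^n)+I(X^n;J_2\mid Y_1^n)$, and $I(X^n;J_2\mid Y_1^n)\le H(J_2)-H(J_2\mid X^n)\le nC_2-na_2$, using $H(J_2\mid X^n,Y_1^n)=H(J_2\mid X^n)$ from the product form. This gives $R\le I(X;Y_1)+C_2-a_2$.
\item The symmetric argument gives $R\le I(X;Y_2)+C_1-a_1$.
\item $I(X^n;J_1,J_2)=H(J_1,J_2)-H(J_1,J_2\mid X^n)$, and $H(J_1,J_2\mid X^n)=H(J_1\mid X^n)+H(J_2\mid X^n)$ by conditional independence of the relay outputs. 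This yields $R\le C_1+C_2-a_1-a_2$.
\end{itemize}

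The main obstacle is the third and fifth inequalities, which must exploit the fact that a large $a_2$ forces $J_2$ to reveal little beyond $X^n$. Write $nR\le I(X^n;Y_1^n,J_2)+n\epsilon_n=I(X^n;Y_1^n)+I(X^n;J_2\mid Y_1^n)$. It then suffices to show
\[
H(J_2\mid Y_1^n)\le I(X^n;\tilde Y_2^n\mid Y_1^n)+H(J_2\mid X^n)+n\big[H_2(\delta)+\delta\log(|\mathcal Y_2|-1)\big]-na_2 \quad\text{in a suitable sense,}
\]
where $\delta=\sqrt{a_2\ln 2/2}$. The idea is to introduce an independent copy $\tilde Y_2^n$ of the channel output given $X^n$, with $J_2$ and $\tilde J_2=\phi_2(\tilde Y_2^n)$ conditionally i.i.d. given $X^n$.

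First, Pinsker's inequality and Marton's transportation inequality (tensorized over the product channel) will convert the entropy bound $H(J_2\mid X^n)=na_2$ into a statement that, for typical $x^n$, the decoding sets $\phi_2^{-1}(j)$ can be blown up by Hamming radius $n\delta$ to capture non-negligible mass. Equivalently, there is a coupling under which $Y_2^n$ and $\tilde Y_2^n$ agree except in about $n\delta$ positions, up to a term of $a_2$. Second, given $\tilde Y_2^n$, the receiver can list the candidates $y_2^n$ within Hamming distance $n\delta$. This costs $n[H_2(\delta)+\delta\log(|\mathcal Y_2|-1)]$ bits, which is the volume of the Hamming ball.

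Combining these two steps gives $I(X^n;J_2\mid Y_1^n)\lesssim I(X^n;\tilde Y_2^n\mid Y_1^n)+n[H_2(\delta)+\delta\log(|\mathcal Y_2|-1)]-na_2$. Single-letterizing then yields $R\le I(X;Y_1,\tilde Y_2)+H_2(\delta)+\delta\log(|\mathcal Y_2|-1)-a_2$. The fifth inequality follows by symmetry.

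I expect the precise handling of the Marton-coupling step to be the delicate part. It requires reconciling the average over $x^n$ with the concavity of $H_2$, and showing that the freedom in choosing the joint law of $(Y_i,\tilde Y_i)$ is legitimate, which is why the bound holds for any such $\tilde Y_i$. The final step takes limits along a subsequence on which $a_{i}$ and the empirical distribution converge, using continuity of all terms.
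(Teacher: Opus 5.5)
Your easy bounds are fine. With $a_i=\frac1n H(J_{i,n}\mid X^n)$, the first, second, fourth and sixth inequalities follow as you describe, and a transport/concentration argument is the right tool for the other two. (The paper itself gives no proof of this statement; it quotes it from \cite{wu2019new}, and $\delta=\sqrt{a\ln 2/2}$ is exactly Marton's blowing-up radius.)

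The gap is in the central step, where you merge two incompatible constructions.
\begin{itemize}
\item If $\tilde Y_2^n$ is a conditionally independent copy of $Y_2^n$ given $X^n$, then $Y_2^n$ and $\tilde Y_2^n$ differ in a constant fraction of positions (about $2p(1-p)$ per letter for a BSC$(p)$). So they do not ``agree except in about $n\delta$ positions.'' What holds instead is that $\tilde Y_2^n$ lies within radius $\sqrt{\tfrac n2\ln\frac{1}{P(J_2|X^n)}}+n\epsilon$ of the \emph{set} $\phi_2^{-1}(J_2)$.
\item For the list argument, that must happen with probability $1-o(1)$. A blow-up capturing only ``non-negligible mass'' leaves a residual term of order $nC_2$ in $H(J_2|\tilde Y_2^n)$.
\item Moreover, $-\log P(J_2|X^n)$ need not concentrate near $na_2$, so the radius depends on $(x^n,j)$. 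You would have to condition on a quantized radius and average using concavity of $t\mapsto H_2(c\sqrt t)+c\sqrt t\log(|\mathcal Y_2|-1)$.
\item Conversely, under a Marton coupling $\tilde Y_2^n$ is not an independent copy, and only its \emph{expected} distance to $Y_2^n$ is controlled. That does not justify listing a Hamming ball of radius $n\delta$.
\end{itemize}

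The coupling route closes cleanly once made explicit.
\begin{enumerate}
\item For each $(x^n,j)$, couple $P_{Y_2^n|X^n=x^n,J_2=j}$ with $\prod_i p(y_2|x_i)$. Their divergence is $\ln\frac{1}{P(j|x^n)}$, so Marton's inequality and Jensen give $\mathbb E[d_{\mathrm H}(Y_2^n,\tilde Y_2^n)]\le n\delta$.
\item Since the second marginal does not depend on $j$, $\tilde Y_2^n$ is independent of $J_2$ given $X^n$.
\item Draw $(Y_1^n,\tilde Y_2^n)$ from any memoryless coupling $\tilde p(y_1,\tilde y_2|x)$, then $(J_2,Y_2^n)$ from the coupling's conditional law given $(X^n,\tilde Y_2^n)$. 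This keeps $(X^n,Y_1^n,J_2)$ correctly distributed and yields $J_2\perp(Y_1^n,\tilde Y_2^n)\mid X^n$.
\item The bridging step you left implicit is then
\[
I(X^n;Y_1^n,J_2)\le I(X^n;Y_1^n,\tilde Y_2^n)+H(J_2|\tilde Y_2^n)-H(J_2|X^n).
\]
\item Finally, $H(J_2|\tilde Y_2^n)\le H(Y_2^n|\tilde Y_2^n)\le\sum_i\bigl[H_2(p_i)+p_i\log(|\mathcal Y_2|-1)\bigr]\le n\bigl[H_2(\delta)+\delta\log(|\mathcal Y_2|-1)\bigr]$, by per-letter Fano, concavity and monotonicity. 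No list decoding is needed.
\end{enumerate}

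Two further corrections.
\begin{itemize}
\item The freedom in the theorem is the joint law of $(Y_1,\tilde Y_2)$ given $X$, not of $(Y_2,\tilde Y_2)$; the construction pins the latter down.
\item $H_2(\delta)+\delta\log(|\mathcal Y_2|-1)$ increases only up to $\delta=1-1/|\mathcal Y_2|$. Beyond that, cap $a_2$ at $\frac{2}{\ln 2}(1-1/|\mathcal Y_2|)^2$. This preserves the other bounds, and the third then follows from $H(J_2|\tilde Y_2^n)\le n\log|\mathcal Y_2|$.
\end{itemize}
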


El Gamal, Gohari, and Nair developed a strengthened cut-set upper bound on the capacity of the primitive relay channel~\cite{egn21}. Their converse employs the Csisz\'ar--K\"orner--Marton sum identity and Gallager-type identification to obtain a single-letter characterization. In the Gaussian setting, their bound coincides with the upper bound of~\cite{wbo17}, whereas in the discrete setting it can be strictly tighter. By adapting their converse method to the two-relay model, we obtain a new upper bound for the primitive diamond relay channel, which will be presented in Section~\ref{sec:proofnewob}. 

To assess the tightness of the upper bounds, we consider a standard achievable benchmark obtained by coded time sharing between decode-and-forward and distributed compress-and-forward/noisy network coding~\cite{CoverElGamal1979,LimKimElGamalChung2011NNC}. The scheme jointly optimizes the time-sharing fraction and the allocation of the relay-link capacities between the two transmission modes. In the Gaussian setting, we evaluate the scheme using Gaussian inputs and independent Gaussian quantization test channels.

\begin{theorem}[DF--CF time-sharing inner bound]
\label{thm:df-cf-timesharing}
Consider a primitive diamond relay channel,  every rate $R<r_{\mathrm{D}}+r_{\mathrm{C}}$ is achievable for some nonnegative $r_{\mathrm{D}}$ and
$r_{\mathrm{C}}$ satisfying
\begin{subequations}
\label{eq:df-cf-timesharing}
\begin{align}
    r_{\mathrm{D}}
    &\leq
    \lambda I_{\mathrm{D}}(X;Y_1),
    \label{eq:df-cf-df1}\\
    r_{\mathrm{D}}
    &\leq
    \lambda I_{\mathrm{D}}(X;Y_2),
    \label{eq:df-cf-df2}\\
    r_{\mathrm{D}}
    &\leq
    a_1+a_2,
    \label{eq:df-cf-df3}\\
    r_{\mathrm{C}}
    &\leq
    (1-\lambda)
    I_{\mathrm{C}}(X;\hat Y_1,\hat Y_2),
    \label{eq:df-cf-cf1}\\
    r_{\mathrm{C}}
    &\leq
    C_1-a_1
    +(1-\lambda)
    \bigl[
        I_{\mathrm{C}}(X;\hat Y_2)
        -I_{\mathrm{C}}(Y_1;\hat Y_1|X)
    \bigr],
    \label{eq:df-cf-cf2}\\
    r_{\mathrm{C}}
    &\leq
    C_2-a_2
    +(1-\lambda)
    \bigl[
        I_{\mathrm{C}}(X;\hat Y_1)
        -I_{\mathrm{C}}(Y_2;\hat Y_2|X)
    \bigr],
    \label{eq:df-cf-cf3}\\
    r_{\mathrm{C}}
    &\leq
    C_1+C_2-a_1-a_2
    -(1-\lambda)
    \bigl[
        I_{\mathrm{C}}(Y_1;\hat Y_1|X)
        +I_{\mathrm{C}}(Y_2;\hat Y_2|X)
    \bigr].
    \label{eq:df-cf-cf4}
\end{align}
\end{subequations}
Here, $I_{\mathrm{D}}$ denotes mutual information evaluated
according to a distribution of the form
\[
    p_{\mathrm{D}}(x,y_1,y_2)
    =
    p_{\mathrm{D}}(x)p(y_1|x)p(y_2|x),
\]
whereas $I_{\mathrm{C}}$ denotes mutual information evaluated
according to a distribution of the form
\[
    p_{\mathrm{C}}(x,y_1,y_2,\hat y_1,\hat y_2) =
    p_{\mathrm{C}}(x)p(y_1|x)p(y_2|x)
    p(\hat y_1|y_1)p(\hat y_2|y_2).
\]
It suffices to consider compression alphabets satisfying $|\mathcal{\hat Y}_1|\leq |\mathcal{Y}_1|+3$ and $|\mathcal{\hat Y}_2|\leq |\mathcal{Y}_2|+3$.

The union is taken over all such distributions and all
\[
    \lambda\in[0,1],\qquad
    0\leq a_1\leq C_1,\qquad
    0\leq a_2\leq C_2.
\]
\end{theorem}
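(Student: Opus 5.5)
We prove the statement by coded time sharing between a decode--forward phase of length $\lambda n$ and a distributed compress--forward (noisy network coding) phase of length $(1-\lambda)n$. The message is split as $M=(M_{\mathrm D},M_{\mathrm C})$ with rates $r_{\mathrm D}$ and $r_{\mathrm C}$. Each relay link is partitioned accordingly: relay $i$ spends $a_i$ bits per (overall) channel use on DF content and $C_i-a_i$ on CF content. This is possible since $J_{i,n}$ ranges over $[1:2^{nC_i}]$, so it can be taken as a pair of indices of sizes $2^{na_i}$ and $2^{n(C_i-a_i)}$. The total error probability is at most the sum of the two phases' error probabilities, so it suffices to show that each phase is reliable under its constraints.

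\emph{DF phase.} We use an i.i.d.\ $p_{\mathrm D}(x)$ codebook of $2^{nr_{\mathrm D}}$ codewords of length $\lambda n$.
\begin{itemize}
\item Each relay decodes $M_{\mathrm D}$ by joint typicality on its own observation. By the packing lemma this succeeds when $r_{\mathrm D}<\lambda I_{\mathrm D}(X;Y_i)$, which gives \eqref{eq:df-cf-df1}--\eqref{eq:df-cf-df2}.
\item Both relays then know $M_{\mathrm D}$. We write $M_{\mathrm D}=(M_{\mathrm D,1},M_{\mathrm D,2})$ with rates $\rho_1\le a_1$ and $\rho_2\le a_2$, $\rho_1+\rho_2=r_{\mathrm D}$. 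Relay $i$ forwards $M_{\mathrm D,i}$ over its $a_i$ budget. Such a split exists exactly when $r_{\mathrm D}\le a_1+a_2$, which is \eqref{eq:df-cf-df3}.
\end{itemize}

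\emph{CF phase.} We use a codebook of $2^{nr_{\mathrm C}}$ i.i.d.\ $p_{\mathrm C}(x)$ codewords of length $m=(1-\lambda)n$.
\begin{itemize}
\item Relay $i$ generates $2^{m\tilde R_i}$ i.i.d.\ $p(\hat y_i)$ sequences, with $\tilde R_i>I_{\mathrm C}(Y_i;\hat Y_i)$. By the covering lemma it finds one jointly typical with $Y_i^m$.
\item Relay $i$ sends the bin index of that sequence using the $n(C_i-a_i)$ bits available (Wyner--Ziv-style binning, as in noisy network coding).
\item The decoder looks for a unique $m_{\mathrm C}$ such that there exist $\hat y_1^m,\hat y_2^m$ in the received bins that are jointly typical with $x^m(m_{\mathrm C})$. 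We adopt the NNC-style analysis: we do not require the compression indices to be decoded correctly, which is why no Wyner--Ziv constraints appear.
\item The error events are indexed by which of $\{\hat Y_1,\hat Y_2\}$ are wrong, in addition to a wrong message. With normalized link rates $\frac{C_i-a_i}{1-\lambda}$ per CF-use, the standard NNC/distributed compression computation gives
\[
R'<\min_{\mathcal S\subseteq\{1,2\}}\Bigl[\sum_{i\in\mathcal S}\tfrac{C_i-a_i}{1-\lambda}+I(X;\hat Y_{\mathcal S^c})-\sum_{i\in\mathcal S}I(Y_i;\hat Y_i|X)\Bigr].
\]
Here we used that, by the product form and the Markov chains $\hat Y_1-Y_1-X-Y_2-\hat Y_2$, the term $I(\hat Y_{\mathcal S};Y_{\mathcal S}|X,\hat Y_{\mathcal S^c})$ equals $\sum_{i\in\mathcal S}I(Y_i;\hat Y_i|X)$.
\item Multiplying by $(1-\lambda)$ with $r_{\mathrm C}=(1-\lambda)R'$, the four choices of $\mathcal S$ give \eqref{eq:df-cf-cf1}--\eqref{eq:df-cf-cf4}.
\end{itemize}

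The main obstacle is this CF-phase error analysis. Care is needed with events where the message is correct but a compression index is wrong (harmless), versus events with a wrong message and a subset of wrong compression indices. The multiletter packing computation must be done with the Markov structure so that the mutual-information terms factor as claimed. We plan to follow the NNC proof of~\cite{LimKimElGamalChung2011NNC}, specialized to a single block, because the relay outputs are orthogonal noiseless links and so no block-Markov signalling is needed.

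Finally, the cardinality bounds $|\hat{\mathcal Y}_i|\le|\mathcal Y_i|+3$ follow from the support lemma (Fenix--Carathéodory), applied one auxiliary at a time. For $\hat Y_1$, we preserve $p(y_1)$ ($|\mathcal Y_1|-1$ functionals) together with the quantities in \eqref{eq:df-cf-cf1}--\eqref{eq:df-cf-cf4} that depend on $\hat Y_1$: $I(X;\hat Y_1,\hat Y_2)$, $I(X;\hat Y_1)$ and $I(Y_1;\hat Y_1|X)$. Each of these is a linear functional of $p(y_1|\hat y_1)$ given $\hat y_1$ (with $\hat Y_2$ fixed). This gives $|\mathcal Y_1|+2$ functionals and hence a $|\mathcal Y_1|+3$ bound. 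The argument for $\hat Y_2$ is symmetric.
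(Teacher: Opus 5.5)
Your proposal is correct and follows the route the paper indicates. The paper describes time sharing between a decode--forward phase and a distributed compress--forward/noisy-network-coding phase, with each relay link split into $a_i$ for DF and $C_i-a_i$ for CF, and the four cut constraints over $\mathcal S\subseteq\{1,2\}$ reduce to the stated CF inequalities via $\hat Y_1 - Y_1 - X - Y_2 - \hat Y_2$. The paper gives no detailed proof beyond citing the standard DF and NNC analyses, so your sketch, including the error-event split, the link-rate normalization by $1-\lambda$, and the support-lemma cardinality count, supplies the intended argument.
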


The remainder of the paper is organized as follows. In Section~\ref{sec:proofnewob}, we present and prove the proposed upper bound. In Section~\ref{sec:gauss}, we establish the equivalence between the Gaussian specialization of our proposed bound and the upper bound in Theorem~\ref{ob:ayfergaussian}. In Section~\ref{sec:discrete}, we show that our proposed bound can be strictly tighter than the discrete-memoryless bound in Theorem~\ref{ob:ayferdiscrete}. Finally, Section~\ref{conclusion} concludes the paper and discusses directions for future work.

\section{The New Upper Bound}\label{sec:proofnewob}
\begin{theorem}\label{ob:relayuvw}
    Any achievable rate $R$ for a primitive diamond relay channel must satisfy
    \begin{subequations}
        \begin{align}
        R\leq &I(X;Y_2)+C_1-I(U;Y_1|X),\label{eqn:obuvw1}\\
        R\leq &I(X;Y_1)+I(X;Y_2|U)-I(X;Y_1|U),\label{eqn:obuvw2}\\
        R\leq & I(X;Y_1)+C_2-I(V;Y_2|X),\label{eqn:obuvw3}\\
        R\leq & I(X;Y_2)+I(X;Y_1|V)-I(X;Y_2|V),\label{eqn:obuvw4}\\
        R\leq &C_1+C_2-I(U;Y_1|X)-I(V;Y_2|X).\label{eqn:obuvw5}
    \end{align}
    \end{subequations}
    The union is taken over distributions of the form $p(x)p(y_1|x)p(y_2|x)p(u|x,y_1)p(v|x,y_2)$, so that $U-(X,Y_1)-Y_2$ and $V-(X,Y_2)-Y_1$.
It suffices to consider auxiliaries satisfying $|U|\leq |X||Y_1|+1$, $|V|\leq |X||Y_2|+1$.
\end{theorem}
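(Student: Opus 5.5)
We adapt the El Gamal--Gohari--Nair converse for the primitive relay channel~\cite{egn21} to two relays. Fix a code with vanishing error probability. By Fano, $nR\le I(M;J_1,J_2)+n\epsilon_n\le I(X^n;J_1,J_2)+n\epsilon_n$. The auxiliaries are identified Gallager-style. With $Q$ uniform on $[1:n]$ and independent of everything, set $U_i=(J_1,Y_1^{i-1},Y_2^{i-1})$ (or a similar past-containing choice, possibly with $X^{i-1}$ or future $Y_2$ terms). Then set $U=(U_Q,Q)$, $X=X_Q$, $Y_j=Y_{j,Q}$, and symmetrically $V=(J_2,\ldots,Q)$. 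The Markov chain $U-(X,Y_1)-Y_2$ must be checked. It should follow from the product form $T(y_1|x)T(y_2|x)$, the memorylessness, and the fact that $J_1$ depends only on $Y_1^n$. To make it hold, I would include only $Y_1$-side variables and the past of $X$ in $U_i$, for example $U_i=(J_1,X^{i-1},Y_1^{i-1})$, and handle the $Y_2$ past via chain rules. The main difficulty is choosing the auxiliary so that this chain holds and the single-letter steps below close.

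The bounds \eqref{eqn:obuvw1}, \eqref{eqn:obuvw3} and \eqref{eqn:obuvw5} come from standard cut arguments. First, $nR\le I(X^n;J_1,Y_2^n)\le I(X^n;Y_2^n)+H(J_1)-H(J_1|X^n,Y_2^n)$, and $H(J_1)\le nC_1$. Since $J_1-(X^n)-Y_2^n$ given the code, $H(J_1|X^n,Y_2^n)=H(J_1|X^n)\ge I(J_1;Y_1^n|X^n)$. Expanding $I(J_1;Y_1^n|X^n)=\sum_i I(J_1,Y_1^{i-1};Y_{1i}|X^n)\ge\sum_i I(U_i;Y_{1i}|X_i)$ uses memorylessness, in the form $Y_{1i}-X_i-(X^{n\setminus i})$. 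This gives \eqref{eqn:obuvw1} after single-letterizing, and \eqref{eqn:obuvw3} follows symmetrically. For \eqref{eqn:obuvw5}, we write $nR\le H(J_1)+H(J_2)-H(J_1,J_2|X^n)$. Conditional independence of $J_1,J_2$ given $X^n$ (from the product channel) yields $H(J_1|X^n)+H(J_2|X^n)$, and each term is lower bounded as above.

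The bounds \eqref{eqn:obuvw2} and \eqref{eqn:obuvw4} are the novel ones and the main obstacle. I would start from $nR\le I(X^n;J_1,Y_2^n)=I(X^n;Y_1^n)-I(X^n;Y_1^n|J_1)+I(X^n;Y_2^n|J_1)$, which is valid because $J_1$ is a function of $Y_1^n$. Expanding, $I(X^n;Y_1^n)\le\sum_i I(X_i;Y_{1i})$. The difference $I(X^n;Y_2^n|J_1)-I(X^n;Y_1^n|J_1)$ is then handled with the Csiszár--Körner--Marton sum identity applied to the sequences $Y_1^n,Y_2^n$ given $J_1$. This should show the difference equals $\sum_i [I(X_i;Y_{2i}|U_i)-I(X_i;Y_{1i}|U_i)]$ for an auxiliary of the form $U_i=(J_1,Y_1^{i-1},Y_{2,i+1}^n)$, up to terms that cancel via the product-channel structure.

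This creates a tension. This auxiliary contains $Y_2$ futures, which threatens the Markov chain $U-(X,Y_1)-Y_2$ required for \eqref{eqn:obuvw1}. I expect the hardest step to be either showing that a single choice works for both bounds, or showing that separate auxiliaries can be merged. As in~\cite{egn21}, I would first try to argue that the chain is preserved because $Y_{2,i+1}^n$ depends on $X^n$ only through independent noise, and absorb the $X$-dependence by conditioning on $X_i$ in the relevant terms. Finally, the cardinality bounds $|\mathcal U|\le|\mathcal X||\mathcal Y_1|+1$ follow from the support lemma. It must preserve $p(x,y_1)$, which takes $|\mathcal X||\mathcal Y_1|-1$ constraints, together with the quantities $I(U;Y_1|X)$ and $I(X;Y_2|U)-I(X;Y_1|U)$, for $|\mathcal X||\mathcal Y_1|+1$ in total. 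The bound for $V$ follows likewise.
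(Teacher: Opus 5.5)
Your route is the paper's: Fano, cut chains for \eqref{eqn:obuvw1}, \eqref{eqn:obuvw3} and \eqref{eqn:obuvw5}, and the Csisz\'ar--K\"orner--Marton identity for \eqref{eqn:obuvw2} and \eqref{eqn:obuvw4}. As written, however, it does not close, because you never fix one auxiliary. The theorem requires the \emph{same} $U$ in \eqref{eqn:obuvw1}, \eqref{eqn:obuvw2} and \eqref{eqn:obuvw5}, and the same $V$ in \eqref{eqn:obuvw3}--\eqref{eqn:obuvw5}. Proving each constraint with its own auxiliary gives only a weaker bound. Your fallbacks $(J_{1,n},X^{i-1},Y_1^{i-1})$ and $(J_{1,n},Y_1^{i-1},Y_2^{i-1})$ do not produce the common conditioning $(Y_1^{i-1},Y_{2,i+1}^n)$ that the sum identity generates, so your route to \eqref{eqn:obuvw2} fails with them.

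The ``tension'' you describe is not real. Take $U_i=(J_{1,n},Y_1^{i-1},Y_{2,i+1}^n)$ everywhere, as the paper does. The broadcast channel is memoryless, of product form, and used without feedback. Hence $Y_{2,i}$ is conditionally independent of $(M,X^{n\setminus i},Y_1^n,Y_2^{n\setminus i})$ given $X_i$. Since $U_i$ is a function of $(Y_1^n,Y_2^{n\setminus i})$, this gives $Y_{2,i}-X_i-(U_i,Y_{1,i})$, and hence $U-(X,Y_1)-Y_2$ after time sharing. Symbols $Y_{2,j}$ with $j\neq i$ never endanger the chain; only something depending on $Y_{2,i}$ itself, such as $J_{2,n}$, would. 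Each constraint depends only on the $(U,X,Y_1,Y_2)$ and $(V,X,Y_1,Y_2)$ marginals, so you may afterwards recouple $U,V$ as $p(u|x,y_1)p(v|x,y_2)$.

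With this choice, your chains for \eqref{eqn:obuvw1} and \eqref{eqn:obuvw5} go through:
$I(J_{1,n};Y_{1,i}|X^n,Y_1^{i-1})=I(J_{1,n},Y_1^{i-1},Y_{2,i+1}^n,X^{n\setminus i};Y_{1,i}|X_i)\geq I(U_i;Y_{1,i}|X_i)$.
Adding $Y_{2,i+1}^n$ costs nothing because $Y_2^n$ is conditionally independent of $Y_1^n$ given $X^n$. Adding $X^{n\setminus i}$ costs nothing because $Y_{1,i}-X_i-(X^{n\setminus i},Y_1^{i-1})$. Symmetrically, take $V_i=(J_{2,n},Y_1^{i-1},Y_{2,i+1}^n)$ and expand $I(J_{2,n};Y_2^n|X^n)$ backward in time.

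One correction to your plan for \eqref{eqn:obuvw2}. The sum identity gives $\sum_i[I(X^n;Y_{2,i}|U_i)-I(X^n;Y_{1,i}|U_i)]$. The excess $I(X^{n\setminus i};Y_{2,i}|U_i,X_i)$ vanishes. The excess $I(X^{n\setminus i};Y_{1,i}|U_i,X_i)$ is in general positive, because conditioning on $J_{1,n}$ couples $Y_{1,i}$ with $Y_{1,i+1}^n$ and hence with $X_{i+1}^n$. It does not cancel. It is discarded because it enters with a minus sign, so you get an inequality in the right direction rather than an equality.
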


\begin{proof}
The argument follows in the same flavor to the proof of \cite[Proposition~1]{egn21}, with the modifications needed to accommodate the second relay link. We consider the joint distribution $(M,X^n,Y_1^n,Y_2^n,J_{1,n},J_{2,n})$ induced by any given code.

We start with Fano's inequality,
$$nR = H(M)\leq I(M;J_{1,n},J_{2,n})+n\epsilon_n.$$

We identify $U_i =(Y_{2,i+1}^n,Y_1^{i-1},J_{1,n})$, and $V_i = (Y_{2,i+1}^n,Y_1^{i-1},J_{2,n})$. Let $Q$ be a time-sharing random variable uniform on $[1:n]$ and independent of everything else. Further let $U =(Q,U_Q),V=(Q,V_Q), X = X_Q, Y_1 = Y_{1,Q},Y_2=Y_{2,Q}$. 

\smallskip
\noindent\emph{First bound.}
\begin{align*}
    n(R-\epsilon_n) \leq &I(M;J_{1,n},J_{2,n})\leq  I(X^n;J_{1,n},Y_2^n)\\
    =& I(X^n;J_{1,n},Y_2^n)+I(X^n;Y_1^n,J_{1,n})-I(X^n;Y_1^n,J_{1,n})\\
    \overset{(a)}{=}&I(X^n;Y_1^n)+I(X^n;Y_2^n|J_{1,n})-I(X^n;Y_1^n|J_{1,n})\\
    \overset{(b)}{=}&
    \sum_{i=1}^n I(X_i;Y_{1,i}|Y^{i-1})
    +I(X_i;Y_{2,i}|J_{1,n},Y_1^{i-1},Y_{2,i+1}^n)
    -I(X_i;Y_{1,i}|J_{1,n},Y_1^{i-1},Y_{2,i+1}^n)\\
    \overset{(c)}{\leq} &\sum_{i=1}^n I(X_i;Y_{1,i})
    +I(X_i;Y_{2,i}|J_{1,n},Y_1^{i-1},Y_{2,i+1}^n)
    -I(X_i;Y_{1,i}|J_{1,n},Y_1^{i-1},Y_{2,i+1}^n)\\
    =&\sum_{i=1}^n I(X_i;Y_{1,i})
    +I(X_i;Y_{2,i}|U_i)
    -I(X_i;Y_{1,i}|U_i)\\
    = & n\bigg(I(X_Q;Y_{1,Q}|Q)
    +I(X_Q;Y_{2,Q}|U_Q,Q)
    -I(X_Q;Y_{1,Q}|U_Q,Q)\bigg)\\
    \overset{(d)}{\leq}&n\bigg(I(X;Y_1)+I(X;Y_2|U)-I(X;Y_1|U)\bigg).
\end{align*}
Here $(a)$ uses $J_{1,n}= f_{1,n}(Y_1^n)$; $(b)$ follows by the telescoping sum and the Csisz\'ar--K\"orner--Marton sum identity; $(c)$ drops $-I(Y_i;Y^{i-1})$; $(d)$ holds since $Q\rightarrow X_Q\rightarrow (Y_{1,Q},Y_{2,Q})$ is Markov.

\smallskip
\noindent\emph{Second bound.}
\begin{align*}
    n (R-\epsilon_n)\leq & I(M;J_{1,n},J_{2,n})\leq I(X^n;J_{1,n},Y_2^n)\\
    \overset{(a)}{=}& I(X^n;Y_2^n)+I(X^n;J_{1,n}|Y_2^n)+H(J_{1,n}|Y_1^n,X^n,Y_2^n)\\
    =&I(X^n;Y_2^n)-I(J_{1,n};Y_1^n|X^n,Y_2^n)+H(J_{1,n}|Y_2^n)\\
    \overset{(b)}{\leq} &nC_1+ I(X^n;Y_2^n)-I(J_{1,n};Y_1^n|X^n)\\
    \overset{(c)}{=}&nC_1+\sum_{i=1}^n I(X_i;Y_{2,i}|Y_{2,i+1}^n)-I(J_{1,n};Y_{1,i}|X^n,Y_1^{i-1})\\
    \overset{(d)}{\leq } &nC_1+\sum_{i=1}^n I(X_i;Y_{2,i})-I(J_{1,n};Y_{1,i}|X^n,Y_1^{i-1})\\
    \overset{(e)}{\leq}&nC_1+\sum_{i=1}^n I(X_i;Y_{2,i})-I(J_{1,n},Y_1^{i-1},Y_{2,i+1}^n;Y_{1,i}|X_i) 
    \\
    =&nC_1+\sum_{i=1}^n I(X_i;Y_{2,i})-I(U_i;Y_{1,i}|X_i) 
    \\
     =&nC_1+n\bigg( I(X_Q;Y_{2,Q}|Q)-I(U_Q;Y_{1,Q}|X_Q,Q)\bigg)\\
    \overset{(f)}{\leq}&nC_1+n\bigg( I(X;Y_2)-I(U;Y_1|X)\bigg).
\end{align*}
Here $(a)$ again uses $J_{1,n} =f_{1,n}(Y_1^n)$; $(b)$ holds by $H(J_{1,n}|Y_2^n)\leq H(J_{1,n})\leq C_1$; $(c)$ follows by the Csisz\'ar--K\"orner--Marton sum identity; $(d)$ drops $-I(Y_{2,i};Y_{2,i+1}^n)$; $(e)$ follows by the Markov chain conditions $(Y_1^{i-1},X^{n\setminus i})\rightarrow X_i\rightarrow Y_{1,i}$, $Y_{2,i+1}^n\rightarrow (J_{1,n},Y_1^{i-1},X^n)\rightarrow Y_{1,i}$, and dropping $I(X^{n\setminus i}; Y_{1,i}|J_{1,n},Y_1^{i-1},Y_{2,i+1}^n,X_i)$; $(f)$ holds since $Q\rightarrow X_Q\rightarrow (Y_{1,Q},Y_{2,Q})$ is Markov.
 
\smallskip
\noindent\emph{Third and fourth bounds.}
Exchanging the roles of the two relays (i.e.\ interchanging the indices $1\leftrightarrow2$,
replacing $U$ by $V$) in the two chains above gives
\begin{align*}
    n(R-\epsilon_n)\leq & n\bigg(I(X;Y_2)+I(X;Y_1|V)-I(X;Y_2|V)\bigg),\\
    n(R-\epsilon_n)\leq& nC_2+n\bigg(I(X;Y_1)-I(V;Y_2|X)\bigg).
\end{align*}
\smallskip
\noindent\emph{Fifth bound.}
\begin{align*}
    n(R-\epsilon_n)\leq& I(M;J_{1,n},J_{2,n}) \leq I(X^n;J_{1,n},J_{2,n})\\
    \leq& I(X^n;J_{1,n})+I(X^n;J_{2,n})\\
    \overset{(a)}{\leq} & nC_1+nC_2-H(J_{1,n}|X^n)-H(J_{2,n}|X^n)\\
    \overset{(b)}{=} &nC_1+nC_2-I(J_{1,n};Y_1^n|X^n)-I(J_{2,n};Y_2^n|X^n)\\
    \overset{(c)}{=}&nC_1+nC_2+\sum_{i=1}^n-I(J_{1,n};Y_{1,i}|X^n,Y_1^{i-1})-I(J_{2,n};Y_{2,i}|X^n,Y_{2,i+1}^n)\\
    \overset{(d)}{\leq }&nC_1+nC_2+\sum_{i=1}^n-I(J_{1,n},Y_1^{i-1},Y_{2,i+1}^n;Y_{1,i}|X_i)-I(J_{2,n},Y_1^{i-1},Y_{2,i+1}^n;Y_{2,i}|X_i)\\
    =&nC_1+nC_2+\sum_{i=1}^n-I(U_i;Y_{1,i}|X_i)-I(V_i;Y_{2,i}|X_i)\\
    =&nC_1+nC_2+n\bigg(-I(U_Q;Y_{1,Q}|X_Q,Q)-I(V_Q;Y_{2,Q}|X_Q,Q)\bigg)\\
    \overset{(e)}{=}&n\bigg(C_1+C_2-I(U;Y_1|X)-I(V;Y_2|X)\bigg).
\end{align*}
Here $(a)$ follows by $H(J_{1,n})\leq C_1$ and $H(J_{2,n})\leq C_2$; $(b)$ follows as $J_{1,n} = f_{1,n}(Y_1^n)$ and $J_{2,n} = f_{2,n}(Y_2^n)$; $(c)$ uses the telescoping sum; $(d)$ follows by the Markov chain conditions $(X^{n\setminus i},Y_1^{i-1})\rightarrow X_i\rightarrow Y_{1,i}$, $Y_{2,i+1}^n\rightarrow (J_{1,n},Y_1^{i-1},X^n)\rightarrow Y_{1,i}$, $(X^{n\setminus i},Y_{2,i+1}^n)\rightarrow X_i\rightarrow Y_{2,i}$, $Y_1^{i-1}\rightarrow (J_{2,n},Y_{2,i+1}^n,X^n)\rightarrow Y_{2,i}$, dropping $-I(X^{n\setminus i}; Y_{1,i}|J_{1,n},Y_1^{i-1},Y_{2,i+1}^n,X_i)$, and dropping $-I(X^{n\setminus i}; Y_{2,i}|J_{2,n},Y_1^{i-1},Y_{2,i+1}^n,X_i)$; $(e)$ holds as $Q\rightarrow X_Q\rightarrow (Y_{1,Q},Y_{2,Q})$ is Markov. 

Letting $n\to\infty$ so that $\epsilon_n\to0$, and invoking continuity of the
mutual-information terms in the induced joint distribution, gives the five claimed constraints.

The cardinality bounds $|U|\leq |X||Y_1|+1$ and $|V|\leq |X||Y_2|+1$ are due to the standard Caratheodory-Bunt\cite{bunt1934bijdrage} arguments.
\end{proof}

\begin{remark}
After some elementary manipulation, the five constraints in Theorem~\ref{ob:relayuvw} can be rewritten as
\begin{align*}
        R\leq &I(X;Y_2)+C_1-I(U;Y_1|X),\\
        R\leq &I(X;Y_2,Y_1)-I(U;Y_2|Y_1)-I(X;Y_1|U,Y_2),\\
        R\leq & I(X;Y_1)+C_2-I(V;Y_2|X),\\
        R\leq & I(X;Y_2,Y_1)-I(V;Y_1|Y_2)-I(X;Y_2|V,Y_1),\\
        R\leq &C_1+C_2-I(U;Y_1|X)-I(V;Y_2|X).
    \end{align*}
By non-negativity of mutual information terms, the upper bound in Theorem~\ref{ob:relayuvw} is at least as tight as the cut-set bound for the general primitive diamond relay channel.
\end{remark}
\begin{remark}\label{rm:constraint_C}
    When evaluating Theorem~\ref{ob:relayuvw}, it suffices to consider the random variables satisfying
    \begin{align*}
        C_1&\geq I(U;Y_1|X),\\
        C_2&\geq I(V;Y_2|X).
    \end{align*}
    The argument is contained in the converse proof. For instance, we have
    \begin{align*}
        0\leq &I(X^n;J_{1,n}|Y_2^n)+H(J_{1,n}|Y_1^n,X^n,Y_2^n)\\
        \leq &n(C_1-I(U;Y_1|X)).
    \end{align*}
\end{remark}

\section{Gaussian Primitive Diamond Relay Channel}\label{sec:gauss}
 
\begin{lemma}\label{lm:gaussianopt}
    To evaluate the upper bound in Theorem~\ref{ob:relayuvw} in the Gaussian setting, it suffices to consider the Gaussian distributions $P_{X}$, $P_{X,Y_1|U}$, and $P_{X,Y_2|V}$.
\end{lemma}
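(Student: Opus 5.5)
We would prove Lemma~\ref{lm:gaussianopt} by a \emph{doubling} (rotation) argument in the style of Geng and Nair, following how the Gaussian case of \cite[Proposition~1]{egn21} is treated. First, we reduce to a scalar optimization. The region of Theorem~\ref{ob:relayuvw} (for fixed $C_1,C_2$, after time sharing) is convex, so it suffices to show Gaussian optimality for every supporting hyperplane. Introduce multipliers $\mu_1,\dots,\mu_5\geq 0$ summing to one, and consider the weighted sum of the five right-hand sides. This sum splits as
\[
F(P_X)+G_1(P_X,P_{U|X,Y_1})+G_2(P_X,P_{V|X,Y_2}),
\]
where $F$ collects the unconditioned terms $I(X;Y_1),I(X;Y_2)$. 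The function $G_1$ collects $-(\mu_1+\mu_5)I(U;Y_1|X)+\mu_2\bigl(I(X;Y_2|U)-I(X;Y_1|U)\bigr)$, and $G_2$ is the symmetric counterpart in $V$. Since $U$ and $V$ interact only through $P_X$, we define
\[
s_\mu(P_X)=F(P_X)+\sup_U G_1+\sup_V G_2 .
\]
The claim is then that $\sup\{s_\mu(P_X):\mathbb E[X^2]\le P\}$ is attained by Gaussian $X$ together with $U,V$ for which $(X,Y_1)|U=u$ and $(X,Y_2)|V=v$ are jointly Gaussian with a covariance that does not depend on $u$, respectively $v$. This is exactly the Gaussianity of $P_X$, $P_{X,Y_1|U}$, $P_{X,Y_2|V}$ asserted in the statement.

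Second, we establish existence of a maximizer. We would add a Lagrange term $-\lambda\mathbb E[X^2]$ and, if needed, relax to the upper concave envelope over $P_X$ (i.e.\ allow an extra time-sharing variable). We would then show the supremum is attained. This uses tightness of $\{P_X:\mathbb E[X^2]\le P\}$ and the cardinality bounds of Theorem~\ref{ob:relayuvw}, which let $U,V$ range over finite alphabets for each $P_X$ up to a discretization limit. Because the Gaussian noise smooths the output densities, weak convergence gives continuity of the relevant differential entropies.

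Third, the doubling step. Take two independent copies $(X',Y_1',Y_2',U',V')$ and $(X'',Y_1'',Y_2'',U'',V'')$ of an optimizer. Form the rotations $X_\pm=(X'\pm X'')/\sqrt2$, $Y_{k,\pm}=(Y_k'\pm Y_k'')/\sqrt2$, and set $\tilde U=(U',U'')$, $\tilde V=(V',V'')$. Since $Z_k',Z_k''$ are i.i.d.\ Gaussian, the rotated pairs again pass through the same product channel. Expanding each term by the chain rule across the $+$ and $-$ components and using the Markov structure $U-(X,Y_1)-Y_2$ gives
\[
2\,s_\mu\le s_\mu(\text{$+$-component with auxiliary } (\tilde U,Y_{1,-}))+s_\mu(\text{$-$-component with auxiliary } (\tilde U,Y_{1,+}))
\]
up to the sign of the cross terms. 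Optimality forces equality, and equality in turn forces conditional independence of $X_+$ and $X_-$ given the auxiliaries (and of the corresponding $Y$'s). The Ghurye--Olkin/Darmois--Skitovich characterization then yields the conditional Gaussianity with covariance independent of the auxiliary value. Applying the same argument at the unconditional level (to $X_\pm$ themselves) shows $P_X$ is Gaussian.

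The main obstacle is the sub-additivity in the third step. The functional mixes terms of opposite signs: $+I(X;Y_2|U)$ against $-I(X;Y_1|U)-I(U;Y_1|X)$. A naive chain-rule split therefore produces cross terms such as $I(X_+;Y_{2,+}|\tilde U,Y_{1,-})$, which must be bounded in the correct direction. We would first try the identification used in \cite{egn21}: condition the $+$ component on $(\tilde U,Y_{1,-})$ and the $-$ component on $(\tilde U,Y_{1,+})$. We would then verify term by term that the leftover terms are nonnegative mutual informations of the form $I(Y_{1,+};Y_{2,-}\,|\,\cdot)$, which vanish only under the required independence. If that fails, we would instead handle $U$ and $V$ separately: fix $P_X$ Gaussian and prove Gaussian optimality of $U$ via a conditional entropy power inequality applied to $h(Y_2|U)-h(Y_1|U)$ and $h(Y_1|X,U)$.
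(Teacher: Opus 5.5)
Your outer structure matches the paper's: convexity of the achievable set, a min--max exchange to reduce to weighted sums, then a Geng--Nair doubling closed by Skitovich--Darmois/Ghurye--Olkin. However, the doubling step, which is the heart of the proof, fails as you set it up.

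Write the $U$-part as $\mu_2[h(Y_2|U)-h(Y_1|U)]+(\mu_1+\mu_2+\mu_5)h(Y_1|X,U)+\mathrm{const}$. Your pairing $U_+=(\tilde U,Y_{1,-})$, $U_-=(\tilde U,Y_{1,+})$ does not give $2s_\mu\le s_\mu(+)+s_\mu(-)$. For a counterexample, take $U=(X,Q,Y_1-X+\sigma_Q W)$, where $Q$ is a fair bit, $W\sim\mathcal N(0,1)$ is independent, and $\sigma_1\neq\sigma_2$. This is admissible, since $U\rightarrow(X,Y_1)\rightarrow Y_2$. For this choice the doubled functional exceeds your split by exactly $(\mu_1+\mu_5)I(Y_{1,+};Y_{1,-}|U_1,U_2)>0$. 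The pairing that works, and the one the paper uses, is crossed: $U_+=(U_1,U_2,Y_{1,-})$, $U_-=(U_1,U_2,Y_{2,+})$, and likewise $V_+=(V_1,V_2,Y_{2,-})$, $V_-=(V_1,V_2,Y_{1,+})$. With this pairing the two cross terms $I(Y_{2,+};Y_{1,-}|U_1,U_2)$ generated by $h(Y_2|U)$ and by $h(Y_1|U)$ cancel. The only remaining defect is $-(\mu_1+\mu_2+\mu_5)\bigl[I(Y_{1,+};X_-|X_+,U_1,U_2,Y_{1,-})+I(Y_{1,-};X_+|X_-,U_1,U_2,Y_{2,+})\bigr]\le 0$. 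The unconditional terms separately leave $-(\mu_2+\mu_3)I(Y_{1,+};Y_{1,-})$ and $-(\mu_1+\mu_4)I(Y_{2,+};Y_{2,-})$, which is how the paper gets Gaussian $P_X$, with the case $\mu_5=1$ handled trivially.

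That exact cancellation is also why your next claim, that equality forces conditional independence, is unjustified. Because $I(X;Y_2|U)$ and $I(X;Y_1|U)$ carry the same weight $\mu_2$, no term of the form $I(Y_{1,\mp};Y_{2,\pm}|\cdot)$ survives. The surviving defect terms can vanish without $(X_+,Y_{1,+})$ and $(X_-,Y_{1,-})$ being conditionally independent. In the example above they vanish because $X_\pm$ are determined by $(U_1,U_2)$, yet $Z_{1,+}$ and $Z_{1,-}$ remain correlated given $(U_1,U_2)$.

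The missing idea is the Geng--Nair strictness device, implemented in two parts:
\begin{itemize}
\item The paper perturbs the functional by $-\lambda_2\epsilon h(Y_2^\epsilon|U)$, which unbalances the weights of $h(Y_2|U)$ and $h(Y_1|U)$. This leaves a defect $-\lambda_2\epsilon I(Y^\epsilon_{1,-};Y^\epsilon_{2,+}|U_1,U_2)$ that must vanish at an optimizer.
\item It also augments both outputs by $\epsilon Z_1+W$, so that this vanishing yields conditional independence of $(X_+,Z_{1,+})$ and $(X_-,Z_{1,-})$. That is what gives Gaussianity of $P_{X,Y_1|U}$ with $u$-independent covariance, rather than only of $P_{X|U}$.
\end{itemize}
The argument then lets $\epsilon\to0$.

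Your conditional-EPI fallback is not a substitute. $h(Y_2|U)$ enters with a positive sign, so an EPI bounds it in the wrong direction. And since $U$ depends on $Y_1$, $Z_1$ is not independent of $X$ given $U$, so no EPI applies to $h(Y_1|U)$ or $h(Y_1|X,U)$.
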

The Gaussian optimality proof can be proved in a similar flavor as in \cite{gn14,gohari2021outer}, using the standard rotation trick. See Appendix~\ref{app:gaussianopt} for the detailed proof.

\begin{proposition}\label{prop:obuvwgauss}
For a Gaussian primitive diamond relay channel, any achievable rate $R$ must satisfy
    \begin{subequations}
    \begin{align}
        R\leq &\tfrac{1}{2}\log\Bigl(1+\tfrac{P}{N_2}\Bigr)+C_1
               +\tfrac{1}{2}\log\Bigl[\tfrac{K_2}{N_1}\bigl(1-\rho^2\bigr)\Bigr],
               \label{eqn:snrbound1}\\
        R\leq &\tfrac{1}{2}\log\Bigl(1+\tfrac{P}{N_1}\Bigr)
               +\tfrac{1}{2}\log\Bigl[\Bigl(1+\tfrac{K_1}{N_2}\Bigr)
                \bigl(1-\rho^2\bigr)\Bigr],\label{eqn:snrbound2}\\
        R\leq &\tfrac{1}{2}\log\Bigl(1+\tfrac{P}{N_1}\Bigr)+C_2
               +\tfrac{1}{2}\log\Bigl[\tfrac{\hat K_2}{N_2}
                \bigl(1-\hat\rho^2\bigr)\Bigr],\label{eqn:snrbound3}\\
        R\leq &\tfrac{1}{2}\log\Bigl(1+\tfrac{P}{N_2}\Bigr)
               +\tfrac{1}{2}\log\Bigl[\Bigl(1+\tfrac{\hat K_1}{N_1}\Bigr)
                \bigl(1-\hat\rho^2\bigr)\Bigr],\label{eqn:snrbound4}\\
        R\leq& C_1+C_2
               +\tfrac{1}{2}\log\Bigl[\tfrac{K_2}{N_1}\bigl(1-\rho^2\bigr)\Bigr]
               +\tfrac{1}{2}\log\Bigl[\tfrac{\hat K_2}{N_2}
                \bigl(1-\hat\rho^2\bigr)\Bigr],\label{eqn:snrbound5}
    \end{align}
    for some $0\leq K_1,\hat K_1\leq P$, $0\leq K_2\leq P+N_1$, $0\leq \hat K_2\leq P+N_2$, and $\rho,\hat \rho\in[0,1]$,
    subject to
    \begin{align}
        (P-K_1)(P+N_1-K_2)\geq& \bigl(P-\rho\sqrt{K_1 K_2}\bigr)^2,
        \label{eqn:snrbound6}\\
        (P-\hat K_1)(P+N_2-\hat K_2)\geq&
        \bigl(P-\hat \rho\sqrt{\hat K_1 \hat K_2}\bigr)^2,\label{eqn:snrbound7}\\
        C_1+\tfrac{1}{2}\log\Bigl[\tfrac{K_2}{N_1}\bigl(1-\rho^2\bigr)\Bigr]
        \geq& 0,\label{eqn:snrbound8}\\
        C_2+\tfrac{1}{2}\log\Bigl[\tfrac{\hat K_2}{N_2}
        \bigl(1-\hat\rho^2\bigr)\Bigr]\geq & 0,\label{eqn:snrbound9}
    \end{align}
\end{subequations}
\end{proposition}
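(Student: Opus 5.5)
By Lemma~\ref{lm:gaussianopt}, it suffices to evaluate Theorem~\ref{ob:relayuvw} with $X\sim\mathcal N(0,P)$ and with $(X,Y_1)$ jointly Gaussian given $U=u$ (mean depending on $u$, covariance not), and similarly $(X,Y_2)$ given $V$. The plan is to write every term of \eqref{eqn:obuvw1}--\eqref{eqn:obuvw5} in terms of the conditional covariance of $(X,Y_1)$ given $U$ (respectively $(X,Y_2)$ given $V$). We then reparametrize by $(K_1,K_2,\rho)$ (respectively $(\hat K_1,\hat K_2,\hat\rho)$) and check that the admissibility of such a covariance is exactly \eqref{eqn:snrbound6}--\eqref{eqn:snrbound7}. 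The $V$-side is symmetric, so only the $U$-side is described.

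\textbf{Parametrization.} Write $\mathrm{Var}(X\mid U)=P-K_1$ and $\mathrm{Var}(Y_1\mid U)=P+N_1-K_2$. Thus $K_1$ and $K_2$ are the variances explained by $U$, namely the variances of $\mathbb E[X\mid U]$ and $\mathbb E[Y_1\mid U]$. Let $\rho$ be the correlation coefficient between $\mathbb E[X\mid U]$ and $\mathbb E[Y_1\mid U]$. The total covariance is $\mathrm{Cov}(X,Y_1)=P$, which splits as $\rho\sqrt{K_1K_2}$ from the means plus $P-\rho\sqrt{K_1K_2}$ from the conditional covariance. Positive semidefiniteness of the conditional covariance matrix is then precisely \eqref{eqn:snrbound6}. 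Conversely, any parameters satisfying \eqref{eqn:snrbound6} can be realized by a jointly Gaussian (vector) $U$. One can restrict to $\rho\ge 0$ by the sign symmetry.

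\textbf{Evaluating the terms.} Compute $I(U;Y_1\mid X)=h(Y_1\mid X)-h(Y_1\mid U,X)=\tfrac12\log\frac{N_1}{\mathrm{Var}(Y_1\mid U,X)}$. The Schur complement gives
\[
\mathrm{Var}(Y_1\mid U,X)=(P+N_1-K_2)-\frac{(P-\rho\sqrt{K_1K_2})^2}{P-K_1}.
\]
This expression should simplify, perhaps after using the structure $Y_1=X+Z_1$ with $Z_1$ independent of $X$ and the Markov chain $U-(X,Y_1)-Y_2$, to $K_2(1-\rho^2)$. That simplification produces \eqref{eqn:snrbound1} and, combined with the $V$-side, \eqref{eqn:snrbound5}. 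Remark~\ref{rm:constraint_C} yields \eqref{eqn:snrbound8}--\eqref{eqn:snrbound9}.

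For \eqref{eqn:snrbound2}, use the rewritten form $I(X;Y_1)+I(X;Y_2\mid U)-I(X;Y_1\mid U)$. Since $Y_2=X+Z_2$ with $Z_2$ independent of $(U,X,Y_1)$,
\[
I(X;Y_2\mid U)=\tfrac12\log\Bigl(1+\tfrac{P-K_1}{N_2}\Bigr).
\]
The other term is
\[
I(X;Y_1\mid U)=\tfrac12\log\frac{(P-K_1)(P+N_1-K_2)}{\det(\text{cond.\ cov})}.
\]
Combining these, the difference should reduce to $\tfrac12\log[(1+K_1/N_2)(1-\rho^2)]$ after reorganizing.

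\textbf{Main obstacle.} The hard step is the algebraic identification: getting the conditional quantities to collapse into the clean forms $\frac{K_2}{N_1}(1-\rho^2)$ and $(1+\tfrac{K_1}{N_2})(1-\rho^2)$. I expect this requires choosing the parametrization judiciously. It may be that $K_1,K_2$ must instead be defined as conditional variances given $(U,\cdot)$, or through an MMSE-type decomposition, rather than as the explained variances as I first try. The approach is to fix the definition of $(K_1,K_2,\rho)$ so that \eqref{eqn:snrbound1} holds identically, and then verify that \eqref{eqn:snrbound2} and \eqref{eqn:snrbound6} follow under the same definition. Finally, check that the ranges $0\le K_1\le P$, $0\le K_2\le P+N_1$, and $\rho\in[0,1]$ cover all admissible conditional covariances. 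Taking the union over the $V$-side parameters then completes the proof.
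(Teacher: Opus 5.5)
Your skeleton matches the paper's: reduce to Gaussians via Lemma~\ref{lm:gaussianopt}, parametrize the covariance structure of $(X,Y_1)$ given $U$, read off the five bounds, get \eqref{eqn:snrbound8}--\eqref{eqn:snrbound9} from Remark~\ref{rm:constraint_C}, and restrict to $\rho\ge 0$ by sign symmetry. The problem is that the parametrization you actually adopt is the wrong one. The step you say ``should simplify'' is false, not merely hard.

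Take $U=X$. With $K_1,K_2$ the explained variances and $\rho$ the correlation of the conditional means, you get $K_1=K_2=P$ and $\rho=1$, since $\mathbb E[X\mid U]=\mathbb E[Y_1\mid U]=X$. So $\tfrac{K_2}{N_1}(1-\rho^2)=0$, whereas $\mathrm{Var}(Y_1\mid U,X)=N_1$. The right-hand sides of \eqref{eqn:snrbound1}, \eqref{eqn:snrbound2} and \eqref{eqn:snrbound5} become $-\infty$, although \eqref{eqn:obuvw1}, \eqref{eqn:obuvw2} and \eqref{eqn:obuvw5} are finite. Likewise, your own correct formula $I(X;Y_2\mid U)=\tfrac12\log\bigl(1+\tfrac{P-K_1}{N_2}\bigr)$ is not $\tfrac12\log\bigl(1+\tfrac{K_1}{N_2}\bigr)$.

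In your variables one gets exactly $I(X;Y_2\mid U)-I(X;Y_1\mid U)=\tfrac12\log\bigl[\bigl(1+\tfrac{P-K_1}{N_2}\bigr)(1-\tilde\rho^2)\bigr]$, where $\tilde\rho$ is the \emph{conditional} correlation. This is the target expression with the roles of the explained and residual covariances interchanged. No use of $Y_1=X+Z_1$ or of the Markov chain can repair this, and the proposition does not follow from your parametrization.

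The missing idea is to parametrize the \emph{residual} covariance. This is the paper's choice and the alternative you gesture at in your last paragraph: set $K_1=\mathrm{Var}(X\mid U)$, $K_2=\mathrm{Var}(Y_1\mid U)$, and let $\rho$ be the correlation of $X$ and $Y_1$ given $U$, i.e.\ $K_{X,Y_1|U}=\begin{bmatrix}K_1&\rho\sqrt{K_1K_2}\\ \rho\sqrt{K_1K_2}&K_2\end{bmatrix}$. Then no Schur-complement algebra is needed:
\begin{itemize}
\item $\mathrm{Var}(Y_1\mid U,X)=K_2(1-\rho^2)$ gives $-I(U;Y_1\mid X)=\tfrac12\log\bigl[\tfrac{K_2}{N_1}(1-\rho^2)\bigr]$;
\item $I(X;Y_1\mid U)=-\tfrac12\log(1-\rho^2)$;
\item $I(X;Y_2\mid U)=\tfrac12\log\bigl(1+\tfrac{K_1}{N_2}\bigr)$.
\end{itemize}

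Under this choice, \eqref{eqn:snrbound6} is no longer positive semidefiniteness of the conditional covariance; that only gives $K_1,K_2\ge 0$ and $|\rho|\le 1$. Instead it is the determinant condition for $K_{X,Y_1}-K_{X,Y_1|U}\succeq 0$, the covariance of $\mathbb E[(X,Y_1)\mid U]$. Its diagonal conditions are exactly the ranges $K_1\le P$ and $K_2\le P+N_1$. Your inequality matched \eqref{eqn:snrbound6} only because the feasible set is symmetric under swapping residual and explained covariance; the objectives are not.

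A smaller point: Lemma~\ref{lm:gaussianopt} gives a Gaussian $X$ of some variance $P_0\le P$, not necessarily $P$. To justify $P_0=P$ (the paper only asserts it), add an independent $\mathcal N(0,P-P_0)$ component to $X$ and reveal it in $U$ and $V$. This leaves every conditional term unchanged, keeps \eqref{eqn:snrbound6}--\eqref{eqn:snrbound9} satisfied, and only increases $I(X;Y_1)$ and $I(X;Y_2)$.
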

\begin{proof}
    By Lemma~\ref{lm:gaussianopt}, when evaluating Theorem~\ref{ob:relayuvw}, we consider the Gaussian distributions $P_X$, $P_{X,Y_1|U}$, and $P_{X,Y_2|V}$. Accordingly, let the variance of $X$ be $P_0$, and let the conditional covariances of $X,Y_1$ given $U$ and of $X,Y_2$ given $V$ be 
\begin{subequations}
\begin{align}
    K_{X,Y_1|U} =&\begin{bmatrix}
        K_1 &\rho \sqrt{K_1 K_2}\\
        \rho \sqrt{K_1 K_2} & K_2
    \end{bmatrix}\preceq \begin{bmatrix}
        P_0 & P_0\\ P_0 & P_0+N_1
    \end{bmatrix},\label{eqn:psdxyu}\\
    K_{X,Y_2|V} =&\begin{bmatrix}
        \hat K_1 &\hat \rho \sqrt{\hat K_1 \hat K_2}\\
        \hat \rho \sqrt{\hat K_1 \hat K_2} & \hat K_2
    \end{bmatrix}\preceq \begin{bmatrix}
        P_0 & P_0\\ P_0 & P_0+N_2
    \end{bmatrix},\label{eqn:psdxyv}
\end{align}
\end{subequations}
for some $0\leq K_1,\hat K_1\leq P_0$, $0\leq K_2\leq P_0+N_1$, $0\leq \hat K_2\leq P_0+N_2$, and $\rho,\hat \rho\in[-1,1]$. 

With this parameterization, it is optimal to set $P_0 = P$. Hence, \eqref{eqn:obuvw1}--\eqref{eqn:obuvw5} specializes to \eqref{eqn:snrbound1}--\eqref{eqn:snrbound5}. Moreover, \eqref{eqn:snrbound6} and \eqref{eqn:snrbound7} are precisely the positive semi-definiteness requirements in \eqref{eqn:psdxyu} and \eqref{eqn:psdxyv}, whereas \eqref{eqn:snrbound8} and \eqref{eqn:snrbound9} follow from Remark~\ref{rm:constraint_C}.

Lastly, the bounds \eqref{eqn:snrbound1}--\eqref{eqn:snrbound5} depend on
$\rho,\hat\rho$ only through $\rho^2,\hat\rho^2$, while for $\rho<0$
\[
    \bigl(P-|\rho|\sqrt{K_1K_2}\bigr)^2
    \leq\bigl(P-\rho\sqrt{K_1K_2}\bigr)^2
    \leq (P-K_1)(P+N_1-K_2),
\]
so that $(K_1,K_2,|\rho|)$ remains feasible and attains the same five values;
the same argument applies to $\hat\rho$.  We may therefore restrict attention
to $\rho,\hat\rho\in[0,1]$.
\end{proof}

\begin{lemma}\label{lm:gaussianequivalence}
    For the Gaussian primitive diamond relay channel, the upper bounds in Theorem~\ref{ob:ayfergaussian} and Theorem~\ref{ob:relayuvw} coincide.
\end{lemma}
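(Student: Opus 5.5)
We work directly with the Gaussian specialization in Proposition~\ref{prop:obuvwgauss}. The plan is to give an explicit change of variables from $(K_1,K_2,\rho)$, respectively $(\hat K_1,\hat K_2,\hat\rho)$, to the angle pair $(\theta_1,\omega_1)$, respectively $(\theta_2,\omega_2)$, of Theorem~\ref{ob:ayfergaussian}. We then check that the five constraints correspond term by term.

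The first step treats the ``$C$-terms''. Define $\theta_1$ through
\[
\sin^2\theta_1=\frac{K_2}{N_1}\bigl(1-\rho^2\bigr),
\]
which equals $2^{-2I(U;Y_1|X)}$, because $K_2(1-\rho^2)$ is the conditional variance of $Y_1$ given $(U,X)$. Constraint \eqref{eqn:snrbound8} is then exactly $\theta_1\ge\arcsin(2^{-C_1})$, and $\sin^2\theta_1\le 1$ holds automatically since $I(U;Y_1|X)\ge0$. With this definition, \eqref{eqn:snrbound1}, \eqref{eqn:snrbound3} and \eqref{eqn:snrbound5} become \eqref{eqn:ayfergauss3}, \eqref{eqn:ayfergauss1} and \eqref{eqn:ayfergauss5}. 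We define $\theta_2$ analogously from $(\hat K_2,\hat\rho)$.

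The second step treats the ``$U$-only'' bounds \eqref{eqn:snrbound2} and \eqref{eqn:snrbound4}. With $\theta_1$ fixed, the bound of Theorem~\ref{ob:relayuvw} is the union over admissible $U$. It is therefore enough to show that
\[
\max\Bigl\{\tfrac12\log\Bigl[\Bigl(1+\tfrac{K_1}{N_2}\Bigr)(1-\rho^2)\Bigr]\Bigr\},
\]
taken over $(K_1,K_2,\rho)$ satisfying \eqref{eqn:snrbound6} and $K_2(1-\rho^2)=N_1\sin^2\theta_1$, equals
\[
\min_{\omega_1\in(\frac{\pi}{2}-\theta_1,\frac{\pi}{2}]}h_1(\omega_1;\theta_1)-\tfrac12\log\Bigl(1+\tfrac{P}{N_1}\Bigr)
+\tfrac12\log\Bigl(1+\tfrac{P}{N_2}\Bigr).
\]
This max--min structure is the main obstacle. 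To handle it, we eliminate $K_2$ using the constraint and treat \eqref{eqn:snrbound6} as a quadratic inequality in $\sqrt{K_1}$ for fixed $\rho$. Since the objective is increasing in $K_1$, the optimum lies on the boundary of \eqref{eqn:snrbound6}, i.e.\ the conditional covariance in \eqref{eqn:psdxyu} is tight (rank-deficient difference). This reduces the problem to a one-parameter family indexed by $\rho$.

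Geometrically, the boundary case corresponds to $U$ being a scalar Gaussian observation. Writing $X$, $Z_1$ as orthogonal vectors with lengths $\sqrt P$, $\sqrt{N_1}$, the angle between $U$'s innovation direction and $Y_1$ defines $\omega_1$, so that $\cos^2\omega_1$ relates to $\rho^2$. We then verify two facts. First, the Lagrangian/envelope condition for the maximization over $\rho$ coincides with the first-order condition $\partial_{\omega}h_1=0$. Second, the function value agrees; this is plausible because the WOPS bound is derived by the same Gaussian extremal (isoperimetric) computation. As a safer fallback, we would prove the two inequalities separately. For ``$\le$'', for every $\omega_1$ in the range we show that $\tfrac12\log[(1+K_1/N_2)(1-\rho^2)]$ is upper bounded using \eqref{eqn:snrbound6}; this is a weak-duality statement and should reduce to Cauchy--Schwarz after substituting $\rho=\cos(\cdot)$. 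For ``$\ge$'', we exhibit the maximizing Gaussian $U$ at the minimizing $\omega_1^\ast$.

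The third step applies the identical argument with indices swapped to $\hat K_1,\hat K_2,\hat\rho$, giving $h_2$. Because the $U$-parameters appear only in bounds $1,2,5$ through $\theta_1$ (and bound 2), and likewise the $V$-parameters through $\theta_2$, the union over $(U,V)$ factorizes correctly. Hence the two rate regions coincide.
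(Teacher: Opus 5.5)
Your Step~1 and the overall architecture agree with the paper. Your $\theta_1$ is the paper's $\psi_1=\arcsin\sqrt{K_2(1-\rho^2)/N_1}$. With it, \eqref{eqn:snrbound1}, \eqref{eqn:snrbound3}, \eqref{eqn:snrbound5} and \eqref{eqn:snrbound8}, \eqref{eqn:snrbound9} become \eqref{eqn:ayfergauss3}, \eqref{eqn:ayfergauss1}, \eqref{eqn:ayfergauss5} and the ranges of $\theta_i$. The factorization reducing everything to a per-$\theta_1$ comparison of \eqref{eqn:snrbound2} with \eqref{eqn:ayfergauss2} is correct. So is pushing to the boundary of \eqref{eqn:snrbound6}: you fix $\rho$ and raise $K_1$, the paper fixes $K_1$ and lowers $\rho$, and either works.

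The gap is that the identity carrying the whole lemma is never proved. For fixed $s=\sin^2\theta_1$, the maximum over that boundary of $\tfrac12\log\bigl[\tfrac{(P+N_1)(N_2+K_1)s}{N_2K_2}\bigr]$, where $K_2=P+N_1-\bigl(\sqrt{P-K_1}+\cos\theta_1\sqrt{N_1K_1/P}\bigr)^2$, must equal $\tfrac12\log\bigl(1+\tfrac{P}{N_2}\bigr)+\min_{\omega_1}h_1(\omega_1;\theta_1)$. Your primary route cannot deliver this. Here $\omega_1$ is minimized while your parameter is maximized, so no change of variables $\omega_1\leftrightarrow\rho$ can make the two objectives agree identically: if $f(\rho(\omega_1))=g(\omega_1)$ held along a bijection, then $\max f=\max g$, which exceeds $\min g$ unless $g$ is constant. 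Matching stationarity conditions only pairs up candidate critical points. ``The function value agrees'' is precisely the claim to be proved; it does not follow from the WOPS bound having a Gaussian-extremal origin.

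Your fallback has the right logical shape, but neither ingredient is supplied. Weak duality requires, for every feasible $(K_1,K_2,\rho)$ with $K_2(1-\rho^2)=N_1s$ and every $c=\cos\omega_1\in[0,\sqrt{s})$,
\[
\frac{N_1(P+N_1)(N_2+K_1)}{K_2}\le\frac{(P+N_1)(P+N_2)-\bigl(P+c\sqrt{N_1N_2}\bigr)^2}{s-c^2}.
\]
This is a genuine two-variable inequality, and you only assert that it ``should reduce to Cauchy--Schwarz''. The reverse direction also needs an explicit pair $(K_1^\star,\omega_1^\star)$ attaining equality, which you do not give.

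The paper closes exactly this step by evaluating both optima in closed form. The WOPS side is elementary. With $a=P(N_1+N_2)+N_1N_2\cos^2\theta_1$ and $b=2P\sqrt{N_1N_2}$, one gets $\tfrac12\log\bigl(1+\tfrac{P}{N_2}\bigr)+h_1=\tfrac12\log\bigl[s\bigl(1+\tfrac{a-bc}{N_1N_2(s-c^2)}\bigr)\bigr]$. This is minimized at the smaller root of $bc^2-2ac+bs=0$, with value $\tfrac12\log\bigl[s\bigl(1+\tfrac{a+\sqrt{a^2-b^2s}}{2N_1N_2s}\bigr)\bigr]$. The paper then shows that the maximum over $K_1$ on the relay side equals the same expression, attained at the stated $K_1^\star$. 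To turn your argument into a proof, you need one of two completions: the explicit maximization over $K_1$, or an actual proof of the displayed inequality together with the saddle pair.
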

\begin{proof}
It suffices to show the upper bounds in Proposition~\ref{prop:obuvwgauss} and Theorem~\ref{ob:ayfergaussian} coincide.

\emph{Step 1: elimination of $\rho$ and $\hat\rho$.}

Each right-hand side in \eqref{eqn:snrbound1}--\eqref{eqn:snrbound5} is
non-increasing in $\rho^2$ and in $\hat\rho^2$, so it is optimal to choose
$\rho^2,\hat\rho^2$ as small as permitted:
\begin{subequations}
    \begin{align}
        \rho = &\max\bigg\{0,\frac{P-\sqrt{(P-K_1)(P+N_1-K_2)}}{\sqrt{K_1K_2}}\bigg\},\label{eqn:k1k2rho1}\\
        \hat \rho=&\max\bigg\{0, \frac{P-\sqrt{(P-\hat K_1)(P+N_2-\hat K_2)}}{\sqrt{\hat K_1\hat K_2}}\bigg\}.\label{eqn:k1k2rho2}
    \end{align}
\end{subequations}
When $\rho,\hat \rho>0$, \eqref{eqn:k1k2rho1} and \eqref{eqn:k1k2rho2} can be rewritten as
\begin{subequations}\label{eqn:sident}
    \begin{align}
        K_2\bigl(1-\rho^2\bigr) =& N_1
        -\frac{P}{K_1}\Bigl(\sqrt{P+N_1-K_2}-\sqrt{P-K_1}\Bigr)^2
        \ \leq\ N_1,\\
        \hat K_2\bigl(1-\hat\rho^2\bigr) =& N_2
        -\frac{P}{\hat K_1}\Bigl(\sqrt{P+N_2-\hat K_2}
        -\sqrt{P-\hat K_1}\Bigr)^2\ \leq\ N_2 .
    \end{align}
\end{subequations}

\emph{Step 2: trigonometric substitution.}

In view of \eqref{eqn:snrbound8}, \eqref{eqn:snrbound9}, and \eqref{eqn:sident}, define
\begin{subequations}\label{eqn:chart}
    \begin{align}
        \psi_1:=\arcsin\sqrt{\frac{K_2\bigl(1-\rho^2\bigr)}{N_1}}
        \ \in\ \Bigl[\arcsin\bigl(2^{-C_1}\bigr),\tfrac{\pi}{2}\Bigr],\\
        \psi_2:=\arcsin\sqrt{\frac{\hat K_2\bigl(1-\hat\rho^2\bigr)}{N_2}}
        \ \in\ \Bigl[\arcsin\bigl(2^{-C_2}\bigr),\tfrac{\pi}{2}\Bigr].
    \end{align}
\end{subequations}
After substitution, \eqref{eqn:snrbound1}--\eqref{eqn:snrbound9} then equal
\begin{subequations}
    \begin{align}
        R\leq &\tfrac{1}{2}\log\Bigl(1+\tfrac{P}{N_2}\Bigr)+C_1
               +\log\sin\psi_1,\label{eqn:psibound1}\\
        R\leq &\tfrac{1}{2}\log\Bigl(1+\tfrac{P}{N_1}\Bigr)
               +\tfrac{1}{2}\log\Bigl(1+\tfrac{K_1}{N_2}\Bigr)
               +\tfrac{1}{2}\log\frac{N_1\sin^{2}\psi_1}
               {P+N_1-\bigl(\sqrt{P-K_1}
                +\cos\psi_1\sqrt{\tfrac{N_1K_1}{P}}\bigr)^{2}},
               \label{eqn:psibound2}\\
        R\leq &\tfrac{1}{2}\log\Bigl(1+\tfrac{P}{N_1}\Bigr)+C_2
               +\log\sin\psi_2,\label{eqn:psibound3}\\
        R\leq &\tfrac{1}{2}\log\Bigl(1+\tfrac{P}{N_2}\Bigr)
               +\tfrac{1}{2}\log\Bigl(1+\tfrac{\hat K_1}{N_1}\Bigr)
               +\tfrac{1}{2}\log\frac{N_2\sin^{2}\psi_2}
               {P+N_2-\bigl(\sqrt{P-\hat K_1}
                +\cos\psi_2\sqrt{\tfrac{N_2\hat K_1}{P}}\bigr)^{2}},
               \label{eqn:psibound4}\\
        R\leq & C_1+C_2+\log\sin\psi_1+\log\sin\psi_2 ,\label{eqn:psibound5}
    \end{align}
\end{subequations}
    for some $\psi_1\in(\arcsin(2^{-C_1}),\frac{\pi}{2}]$, $\psi_2\in(\arcsin(2^{-C_2}),\frac{\pi}{2}]$, and $K_1\in[\frac{PN_1\cos^2\psi_1}{P+N_1\cos^2\psi_1},\,P]$, $\hat K_1\in[\frac{PN_2\cos^2\psi_2}{P+N_2\cos^2\psi_2},\,P]$.

Upon relating $(\psi_1,\psi_2)$ with $(\theta_1,\theta_2)$, the constraints \eqref{eqn:psibound1}, \eqref{eqn:psibound3}, and \eqref{eqn:psibound5} coincide with \eqref{eqn:ayfergauss1}, \eqref{eqn:ayfergauss3}, and \eqref{eqn:ayfergauss5}, respectively. It remains to show that \eqref{eqn:psibound2} and \eqref{eqn:psibound4} coincide with \eqref{eqn:ayfergauss2} and \eqref{eqn:ayfergauss4}.

\emph{Step 3: two explicit optimizations.}

A direct computation yields
\begin{subequations}
    \begin{align}
    &\max_{K_1\in\bigl[\frac{PN_1\cos^{2}\psi_1}{P+N_1\cos^{2}\psi_1},\,P\bigr]}
    \frac{1}{2}\log\Bigl(1+\frac{P}{N_1}\Bigr)
    +\frac{1}{2}\log\Bigl(1+\frac{K_1}{N_2}\Bigr)
    +\frac{1}{2}\log\frac{N_1\sin^{2}\psi_1}
      {P+N_1-\bigl(\sqrt{P-K_1}
       +\cos\psi_1\sqrt{\frac{N_1K_1}{P}}\bigr)^{2}}\nonumber\\
    =&\frac{1}{2}\log\left[\left(1+
      \frac{P(N_1+N_2)+N_1N_2\cos^{2}\psi_1
            +\sqrt{\bigl[P(N_1+N_2)+N_1N_2\cos^{2}\psi_1\bigr]^{2}
                   -4P^{2}N_1N_2\sin^{2}\psi_1}}
           {2N_1N_2\sin^{2}\psi_1}\right)\sin^{2}\psi_1\right],
    \label{eqn:finalpsibound2}\\
    &\max_{\hat K_1\in\bigl[\frac{PN_2\cos^{2}\psi_2}{P+N_2\cos^{2}\psi_2},\,P\bigr]}
    \frac{1}{2}\log\Bigl(1+\frac{P}{N_2}\Bigr)
    +\frac{1}{2}\log\Bigl(1+\frac{\hat K_1}{N_1}\Bigr)
    +\frac{1}{2}\log\frac{N_2\sin^{2}\psi_2}
      {P+N_2-\bigl(\sqrt{P-\hat K_1}
       +\cos\psi_2\sqrt{\frac{N_2\hat K_1}{P}}\bigr)^{2}}\nonumber\\
    =&\frac{1}{2}\log\left[\left(1+
      \frac{P(N_1+N_2)+N_1N_2\cos^{2}\psi_2
            +\sqrt{\bigl[P(N_1+N_2)+N_1N_2\cos^{2}\psi_2\bigr]^{2}
                   -4P^{2}N_1N_2\sin^{2}\psi_2}}
           {2N_1N_2\sin^{2}\psi_2}\right)\sin^{2}\psi_2\right],
    \label{eqn:finalpsibound4}
\end{align}
\end{subequations}
where the maximizers, whose feasibility proof is omitted here, are given by
\begin{align*}
    K_1^{\star}&=\frac{P^{2}\bigl[P(N_1-N_2)-N_1N_2\cos^{2}\psi_1+A\bigr]^{2}}
    {P\bigl[P(N_1-N_2)-N_1N_2\cos^{2}\psi_1+A\bigr]^{2}
     +N_1\cos^{2}\psi_1
      \bigl[2P^{2}+P(N_1+N_2)+N_1N_2\cos^{2}\psi_1-A\bigr]^{2}},\\
    A&:=\sqrt{\bigl[P(N_1+N_2)+N_1N_2\cos^{2}\psi_1\bigr]^{2}
             -4P^{2}N_1N_2\sin^{2}\psi_1},\\
    \hat K_1^{\star}&=\frac{P^{2}\bigl[P(N_2-N_1)-N_1N_2\cos^{2}\psi_2
        +\hat A\bigr]^{2}}
    {P\bigl[P(N_2-N_1)-N_1N_2\cos^{2}\psi_2+\hat A\bigr]^{2}
     +N_2\cos^{2}\psi_2
      \bigl[2P^{2}+P(N_1+N_2)+N_1N_2\cos^{2}\psi_2-\hat A\bigr]^{2}},\\
    \hat A&:=\sqrt{\bigl[P(N_1+N_2)+N_1N_2\cos^{2}\psi_2\bigr]^{2}
             -4P^{2}N_1N_2\sin^{2}\psi_2}.
\end{align*}

The minimization problems in \eqref{eqn:ayfergauss2} and
\eqref{eqn:ayfergauss4} admit explicit solutions as well:
\begin{subequations}
    \begin{align}
    &\frac{1}{2}\log\Bigl(1+\frac{P}{N_2}\Bigr)
    +\min_{\omega_1\in(\frac{\pi}{2}-\theta_1,\frac{\pi}{2}]}
     h_1(\omega_1;\theta_1)\nonumber\\
    =&\frac{1}{2}\log\left[\left(1+
      \frac{P(N_1+N_2)+N_1N_2\cos^{2}\theta_1
            +\sqrt{\bigl[P(N_1+N_2)+N_1N_2\cos^{2}\theta_1\bigr]^{2}
                   -4P^{2}N_1N_2\sin^{2}\theta_1}}
           {2N_1N_2\sin^{2}\theta_1}\right)\sin^{2}\theta_1\right],
    \label{eqn:finalayferbound2}\\
    &\frac{1}{2}\log\Bigl(1+\frac{P}{N_1}\Bigr)
    +\min_{\omega_2\in(\frac{\pi}{2}-\theta_2,\frac{\pi}{2}]}
     h_2(\omega_2;\theta_2)\nonumber\\
    =&\frac{1}{2}\log\left[\left(1+
      \frac{P(N_1+N_2)+N_1N_2\cos^{2}\theta_2
            +\sqrt{\bigl[P(N_1+N_2)+N_1N_2\cos^{2}\theta_2\bigr]^{2}
                   -4P^{2}N_1N_2\sin^{2}\theta_2}}
           {2N_1N_2\sin^{2}\theta_2}\right)\sin^{2}\theta_2\right],
    \label{eqn:finalayferbound4}
\end{align}
\end{subequations}
where the minimizers, whose feasibility proof is omitted here, are given by
\begin{align*}
    \cos\omega_1^{\star}
    =&\frac{2P\sqrt{N_1N_2}\,\sin^{2}\theta_1}
           {P(N_1+N_2)+N_1N_2\cos^{2}\theta_1
            +\sqrt{\bigl[P(N_1+N_2)+N_1N_2\cos^{2}\theta_1\bigr]^{2}
                   -4P^{2}N_1N_2\sin^{2}\theta_1}},\\
    \cos\omega_2^{\star}
    =&\frac{2P\sqrt{N_1N_2}\,\sin^{2}\theta_2}
           {P(N_1+N_2)+N_1N_2\cos^{2}\theta_2
            +\sqrt{\bigl[P(N_1+N_2)+N_1N_2\cos^{2}\theta_2\bigr]^{2}
                   -4P^{2}N_1N_2\sin^{2}\theta_2}}.
\end{align*}
Since \eqref{eqn:finalayferbound2} and \eqref{eqn:finalayferbound4} are identical to \eqref{eqn:finalpsibound2} and \eqref{eqn:finalpsibound4} when relating $(\psi_1,\psi_2)$ with $(\theta_1,\theta_2)$, \eqref{eqn:psibound2} and \eqref{eqn:psibound4} coincide with \eqref{eqn:ayfergauss2} and \eqref{eqn:ayfergauss4}, which completes the proof.
\end{proof}

\section{Evaluation for Discrete Channels}\label{sec:discrete}

In this section, we numerically compare the new upper bound in Theorem~\ref{ob:relayuvw} with the cut-set bound, the discrete upper bound by \cite{wu2019new} as stated in Theorem~\ref{ob:ayferdiscrete}, and the DF--CF time-sharing inner bound in Theorem~\ref{thm:df-cf-timesharing}. 
We consider multiple channel instances and set $C_1=C_2=C$ throughout.

\begin{figure}[t]
    \centering
    \begin{subfigure}[t]{0.325\linewidth}
        \centering
        \includegraphics[width=\linewidth]{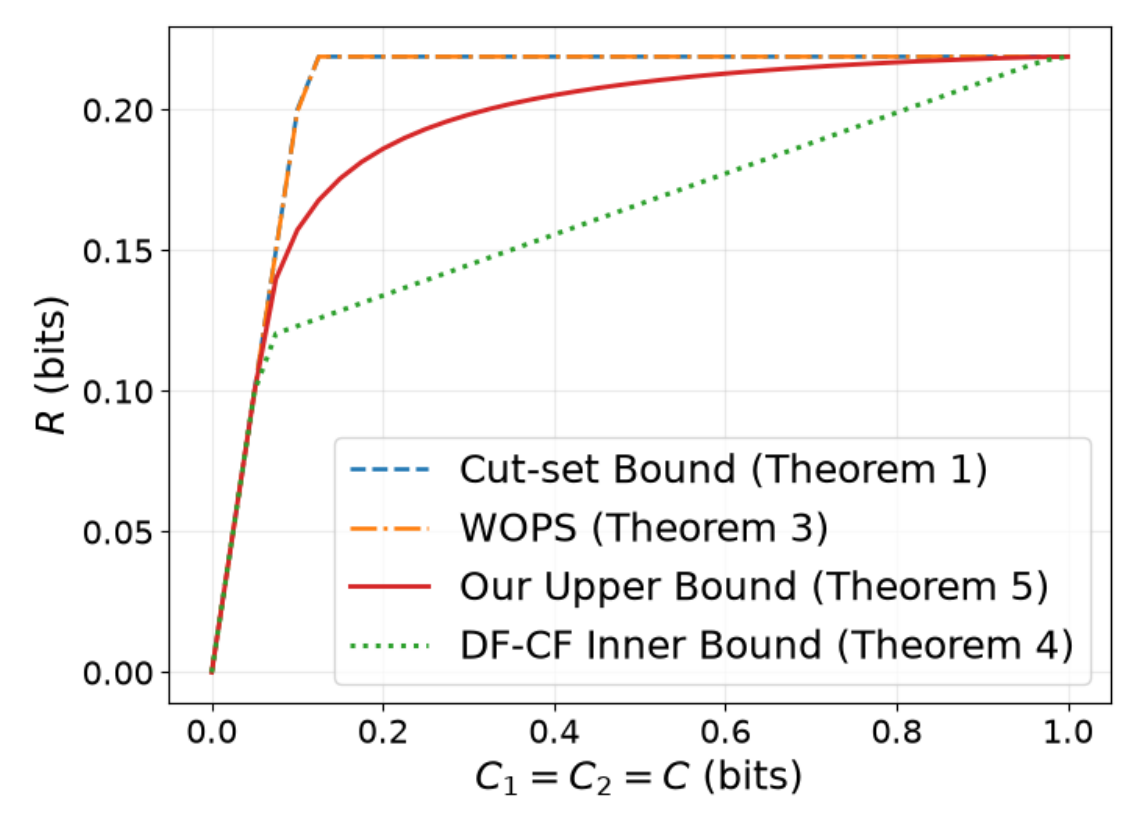}
        \caption{$T_{Y_1|X}=T_{Y_2|X}=\mathrm{BSC}(0.3)$.}
        \label{fig:Ours_symmetricBSC}
    \end{subfigure}
    \hfill
    \begin{subfigure}[t]{0.325\linewidth}
        \centering
        \includegraphics[width=\linewidth]{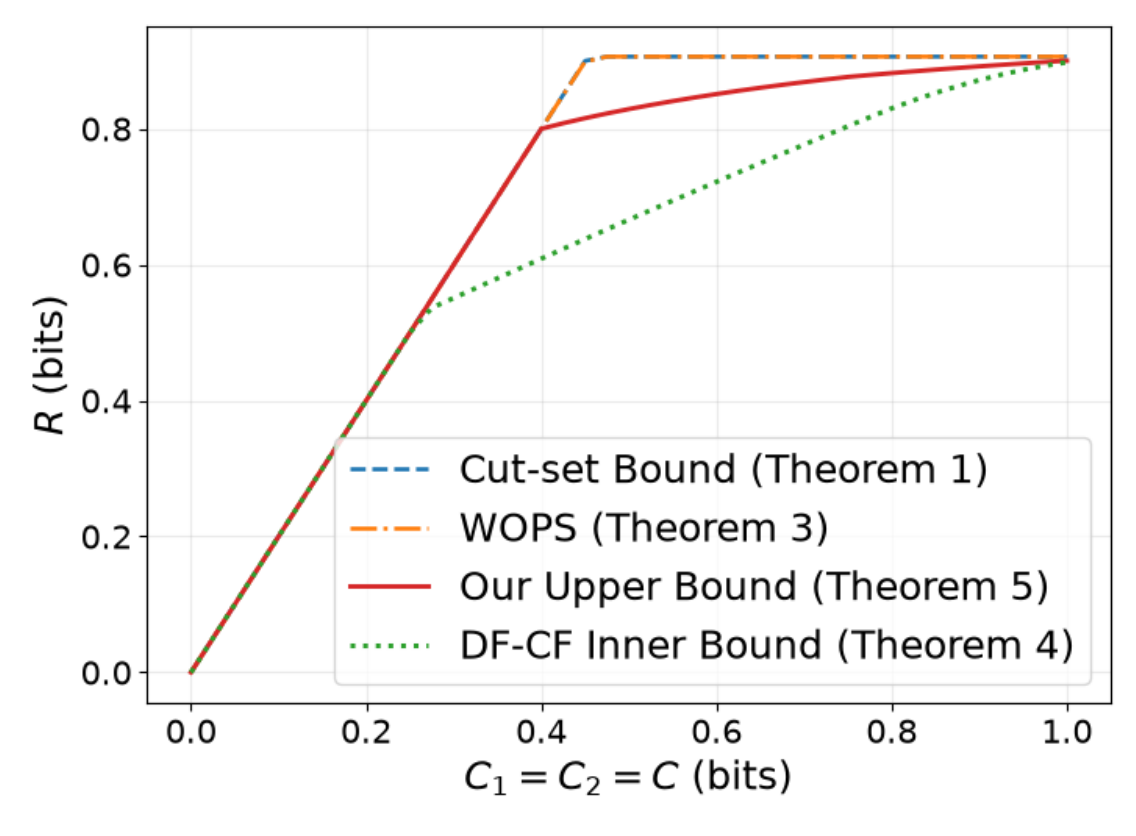}
        \caption{$T_{Y_1|X}=\mathrm{BEC}(0.2)$ and
        $T_{Y_2|X}=\mathrm{BSC}(0.1)$.}
        \label{fig:Ours_BEC_BSC}
    \end{subfigure}
    \hfill
    \begin{subfigure}[t]{0.325\linewidth}
        \centering
        \includegraphics[width=\linewidth]{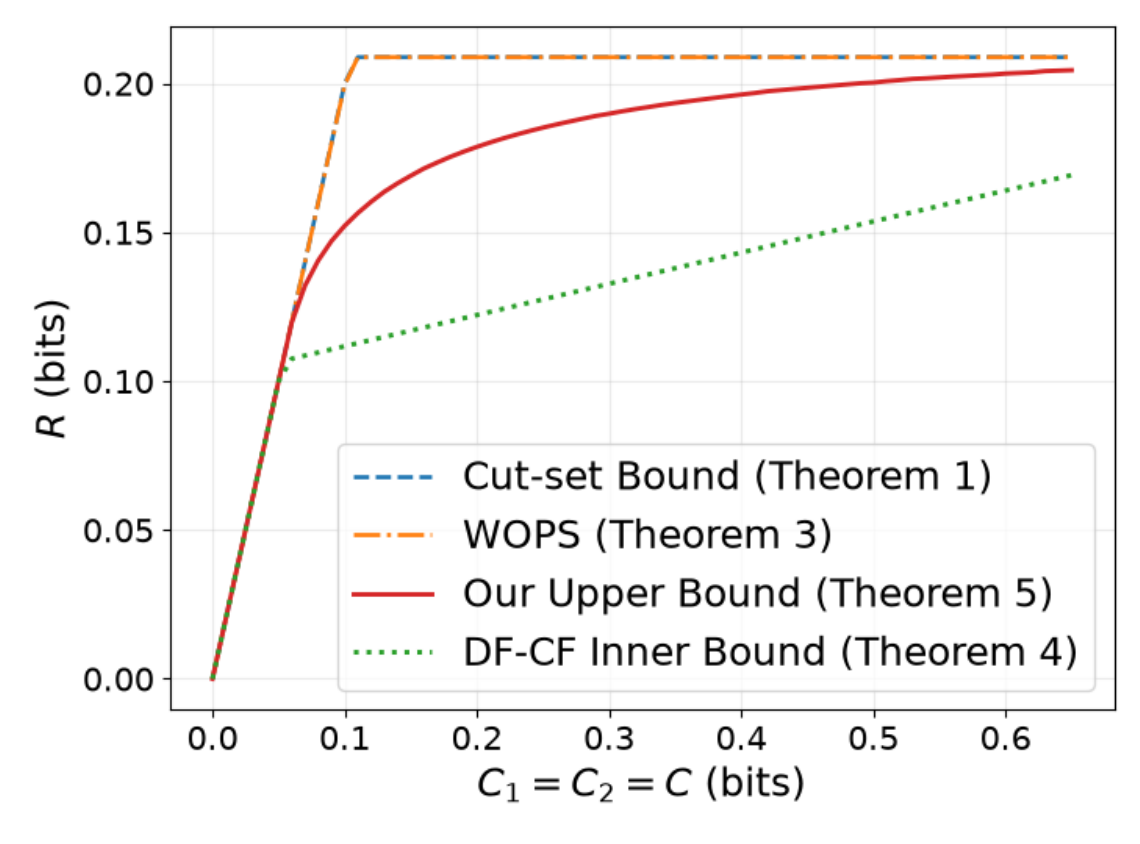}
        \caption{$T_{Y_1|X}=\mathrm{BSC}(0.3)$ and $T_{Y_2|X}$ is obtained by
        cascading $\mathrm{BSC}(0.3)$ with $\mathrm{BEC}(0.1)$.}
        \label{fig:Ours_degrade}
    \end{subfigure}
    \caption{Comparison of the bounds for primitive diamond channels under
    different marginal channels.}
    \label{fig:Ours_three_channels}
\end{figure}

Specifically, we consider the following three instances: 
\begin{itemize}
    \item The symmetric channel $T_{Y_1|X}=T_{Y_2|X}=\mathrm{BSC}(0.3)$. 
    The corresponding comparison is shown in Figure~\ref{fig:Ours_symmetricBSC}. 

    \item The asymmetric channel with $T_{Y_1|X}=\mathrm{BEC}(0.2), T_{Y_2|X}=\mathrm{BSC}(0.1)$, and the corresponding comparison is shown in Figure~\ref{fig:Ours_BEC_BSC}. 

    \item The stochastically degraded marginal channels where $T_{Y_1|X}=\mathrm{BSC}(0.3)$, and $T_{Y_2|X}$ is obtained by cascading a $\mathrm{BSC}(0.3)$ with a $\mathrm{BEC}(0.1)$. The corresponding bounds are shown in Figure~\ref{fig:Ours_degrade}.
\end{itemize}


\section{Concluding Remarks}\label{conclusion}
We derive a single-letter upper bound for the discrete memoryless primitive diamond relay channel. We prove that the Gaussian specialization of the proposed single-letter upper bound coincides with the strengthened Gaussian upper bound of~\cite{wu2019new}. We also numerically compare the proposed upper bound with the cut-set bound, the upper bound of~\cite{wu2019new}, and an optimized DF--CF time-sharing inner bound. For selected discrete memoryless primitive diamond relay channels, the proposed bound is strictly tighter than both existing upper bounds, with a clearly visible numerical gap.

The gap between the known inner and outer bounds motivates the search for a tighter outer bound. The auxiliary-receiver approach has yielded tighter outer bounds for several multiuser channels, including broadcast, interference, and relay channels~\cite{gohari2021outer,gzn24,geng25,gln26}. Although a direct introduction of auxiliary receivers did not improve the outer bound in our investigation, more refined applications of this approach to the primitive diamond relay channel remain a promising direction for future work.

\section*{AI Statement}
The authors used the AI systems GPT-5.6 Sol, GPT-6 Astra, and Claude Fable 5.1 to assist with literature searches, numerical simulations, exploration of the benchmark inner bound, development of the proofs of Lemmas~\ref{lm:gaussianopt} and~\ref{lm:gaussianequivalence}, and manuscript drafting. The authors take full responsibility for the accuracy and integrity of the manuscript.

\appendix
\section*{The Proof of Lemma~\ref{lm:gaussianopt}}\label{app:gaussianopt}
    We show that jointly Gaussian random variables suffice to evaluate the upper bound in Theorem~\ref{ob:relayuvw} for the Gaussian channel. The argument has two parts: we first express the bound as an optimization of weighted information functionals, and then study Gaussian optimality for each such functional.

    For an admissible distribution $p(u,v,x)$, define
    \begin{align*}
        F_1= &I(X;Y_2)+C_1-I(U;Y_1|X),\\
        F_2= &I(X;Y_1)+I(X;Y_2|U)-I(X;Y_1|U),\\
        F_3= & I(X;Y_1)+C_2-I(V;Y_2|X),\\
        F_4= & I(X;Y_2)+I(X;Y_1|V)-I(X;Y_2|V),\\
        F_5= &C_1+C_2-I(U;Y_1|X)-I(V;Y_2|X).
    \end{align*}

    Let $\mathcal P$ denote the class of joint distributions induced by the Gaussian channel for which
$$\mathbb E[X^2]\leq P,\qquad U\rightarrow (X,Y_1)\rightarrow Y_2,\qquad V\rightarrow (X,Y_2)\rightarrow Y_1.$$
Define the set $$\mathcal{A}:= \{a\in \mathcal{R}^5:a_j\leq F_j(p)\quad\forall j, \text{  for some }p\in \mathcal{P} \}.$$

\begin{lemma}\label{lm:convexity-ob}
    The set $\mathcal{A}$ is convex.
\end{lemma}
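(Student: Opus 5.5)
The plan is to use a time-sharing argument. Take $a,b\in\mathcal A$ with witnessing distributions $p_a,p_b\in\mathcal P$, so that $a_j\le F_j(p_a)$ and $b_j\le F_j(p_b)$ for all $j$. Fix $\lambda\in[0,1]$. Introduce a binary variable $W\sim\mathrm{Bern}(\lambda)$, independent of the channel noise, and set $(U,V,X)\sim p_a$ when $W=1$ and $(U,V,X)\sim p_b$ when $W=0$.

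We then form new auxiliaries $\tilde U=(U,W)$ and $\tilde V=(V,W)$, keep $X$, and generate $Y_1,Y_2$ from $X$ through the Gaussian channel. The resulting law $\tilde p$ must lie in $\mathcal P$, which we check as follows.
\begin{itemize}
\item The power constraint holds because $\mathbb E[X^2]=\lambda\mathbb E_{p_a}[X^2]+(1-\lambda)\mathbb E_{p_b}[X^2]\le P$.
\item For the Markov chain $\tilde U\to(X,Y_1)\to Y_2$, conditioned on $W$ we have $U\to(X,Y_1)\to Y_2$. Moreover, $Y_2$ depends only on $X$ through the noise $Z_2$, which is independent of $(W,U,Y_1)$ given $X$. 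Hence $Y_2$ is independent of $(U,W,Y_1)$ given $X$.
\item The chain $\tilde V\to(X,Y_2)\to Y_1$ follows by the same argument.
\end{itemize}

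The core step is to show $F_j(\tilde p)\ge \lambda F_j(p_a)+(1-\lambda)F_j(p_b)$ for every $j$. Terms conditioned on the auxiliaries are exactly linear in $\lambda$. For example, $I(X;Y_2|\tilde U)=I(X;Y_2|U,W)=\lambda I_{p_a}(X;Y_2|U)+(1-\lambda)I_{p_b}(X;Y_2|U)$, and the same holds for $I(X;Y_1|\tilde U)$ and $I(X;Y_1|\tilde V)$. The terms $I(\tilde U;Y_1|X)$ and $I(\tilde V;Y_2|X)$ need a short check. Write $I(U,W;Y_1|X)=I(W;Y_1|X)+I(U;Y_1|X,W)$. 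Since $Y_1$ is independent of $W$ given $X$, we get $I(W;Y_1|X)=0$, so this term is also the $\lambda$-average. The only remaining terms are the unconditional $I(X;Y_1)$ and $I(X;Y_2)$, and they enter every $F_j$ with a positive sign. Since $W\to X\to Y_k$ is Markov, $I(X;Y_k)\ge I(X;Y_k|W)=\lambda I_{p_a}(X;Y_k)+(1-\lambda)I_{p_b}(X;Y_k)$. This is the same concavity step used in step $(d)$ of the converse. Therefore $\lambda a+(1-\lambda)b\le F(\tilde p)$ componentwise, so it lies in $\mathcal A$.

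I expect the main obstacle to be verifying that the negative terms do not pick up extra mutual information with $W$. This reduces to the conditional independence of the channel noise from $W$ given $X$, and to the signs of the unconditional terms. A secondary point is that $|\tilde U|$ doubles. This is harmless because $\mathcal P$ imposes no cardinality restriction on the auxiliaries.
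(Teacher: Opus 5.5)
Your proposal is correct and follows essentially the same route as the paper. Both arguments append a Bernoulli time-sharing variable $Q$ to both auxiliaries, $\tilde U=(U,Q)$ and $\tilde V=(V,Q)$, verify the power and Markov constraints, use $I(Q;Y_1|X)=I(Q;Y_2|X)=0$ for the negative terms, and use $I(X;Y_k)=I(X;Y_k|Q)+I(Q;Y_k)\ge I(X;Y_k|Q)$ for the positive unconditional terms; your sign observation covering all five $F_j$ at once is a compact version of the paper's explicit check for $j=1,2,5$.
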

    \begin{proof}
        Suppose we have $\{a_j^A\}, \{a_j^B\}\in \mathcal{A}$, and $P^A,P^B\in \mathcal{P}$, such that $a_j^A\leq F_j(P^A),a_j^B\leq F_j(P^B),\forall j\in[1:5]$. 
        
        Let $Q\in \{A,B\}$ be Bernoulli($\theta$). We construct a new distribution $r((u,q),(v,q),x)$ such that $\forall (u,v,x)\in\mathcal{U}\times\mathcal{V}\times \mathcal{X}$,
        \begin{align*}
            r((u,A),(v,A),x) =& \theta p^A(u,v,x), \\
            r((u,B),(v,B),x) =& (1-\theta)p^B(u,v,x).
        \end{align*}
        $r\in\mathcal{P}$ since $(U,Q)\rightarrow (X,Y_1)\rightarrow Y_2$, $(V,Q)\rightarrow (X,Y_2)\rightarrow Y_1$, and the power constraint holds as follows
        $$\mathbb E_r[X^2] =\theta \mathbb E_{p^A}[X^2]+(1-\theta) \mathbb E_{p^B}[X^2]\leq P.$$

        In the following, we demonstrate $F_j(r)\geq \theta a_j^A+(1-\theta)a_j^B$ for $j=1,2,5$. The argument for $j=3,4$ follows analogously. Thus, $\mathcal{A}$ is convex.
        \begin{align*}
            F_1(r) =& I_r(X;Y_2)+C_1-I_r(U,Q;Y_1|X)\\
            =&I_r(X;Y_2|Q)+C_1-I_r(U;Y_1|X,Q)+I_r(Q;Y_2)-I_r(Q;Y_1|X)\\
            =&I_r(X;Y_2|Q)+C_1-I_r(U;Y_1|X,Q)+I_r(Q;Y_2)\\
            \geq &I_r(X;Y_2|Q)+C_1-I_r(U;Y_1|X,Q)\\
            =&\theta (I_{p^A}(X;Y_2)+C_1-I_{p^A}(U;Y_1|X))+(1-\theta) (I_{p^B}(X;Y_2)+C_1-I_{p^B}(U;Y_1|X))\\
            =&\theta F_1(p^A)+(1-\theta )F_1(p^B)\\
            \geq &\theta a_1^A+(1-\theta)a_1^B.\\
            F_2(r)=&I_r(X;Y_1)+I_r(X;Y_2|U,Q)-I_r(X;Y_1|U,Q)\\
            =&I_r(X;Y_1|Q)+I_r(X;Y_2|U,Q)-I_r(X;Y_1|U,Q)+I_r(Q;Y_1)\\
            \geq &I_r(X;Y_1|Q)+I_r(X;Y_2|U,Q)-I_r(X;Y_1|U,Q)\\
            =&\theta (I_{p^A}(X;Y_1)+I_{p^A}(X;Y_2|U)-I_{p^A}(X;Y_1|U))+(1-\theta) (I_{p^B}(X;Y_1)+I_{p^B}(X;Y_2|U)-I_{p^B}(X;Y_1|U))\\
            =& \theta F_2(p^A)+(1-\theta) F_2(p^B)\\
            \geq & \theta a_2^A+(1-\theta)a_2^B.\\
            F_5(r)=& C_1+C_2-I_r(U,Q;Y_1|X)-I_r(V,Q;Y_2|X)\\
            =&C_1+C_2-I_r(U;Y_1|X,Q)-I_r(V;Y_2|X,Q)-I_r(Q;Y_1|X)-I_r(Q;Y_2|X)\\
            =&C_1+C_2-I_r(U;Y_1|X,Q)-I_r(V;Y_2|X,Q)\\
            =&\theta (C_1+C_2-I_{p^A}(U;Y_1|X)-I_{p^A}(V;Y_2|X))+(1-\theta) (C_1+C_2-I_{p^B}(U;Y_1|X)-I_{p^B}(V;Y_2|X))\\
            =&\theta F_5(p^A)+(1-\theta) F_5 (p^B)\\
            \geq & \theta a_5^A+(1-\theta) a_5^B.
        \end{align*}
        
    \end{proof}
    
    We rewrite the rate upper bound as
    \begin{align*}
        R\leq &\max_{p\in\mathcal{P}} \min_{1\leq i\leq 5} F_i(p)
        = \max_{p\in\mathcal{P}}\min_{\substack{\lambda_j\geq 0,\\\sum_{j=1}^5 \lambda_j=1}} \sum_{j=1}^5 \lambda_j F_j(p)
        =\min_{\substack{\lambda_j\geq 0,\\\sum_{j=1}^5 \lambda_j=1}} \max_{p\in\mathcal{P}} \sum_{j=1}^5 \lambda_j F_j(p).
    \end{align*}
    The last equality follows from the convexity of $\mathcal{A}$ and the applicable min-max exchange result in \cite{ggny14}[Corollary 2].

    \begin{remark}
        In a similar way, we can prove the set $\mathcal{A}^G:=\{a\in \mathcal{R}^5:a_j\leq F_j(p)\quad\forall j, \text{  for some }p\in \mathcal{P}^G \}$ is convex, where
        $\mathcal P^G$ denotes the subset to  $\mathcal{P}$ with additional requirements such that $P_X$, $P_{X,Y_1|U}$, $P_{X,Y_2|V}$ are Gaussian. Moreover, we conclude the following min-max exchange:
$$\max_{p\in \mathcal{P}^G}\min_{\substack{\lambda_j\geq 0,\\\sum_{j=1}^5 \lambda_j=1}} \sum_{j=1}^5 \lambda_j F_j(p)
        =\min_{\substack{\lambda_j\geq 0,\\\sum_{j=1}^5 \lambda_j=1}} \max_{p\in\mathcal{P}^G} \sum_{j=1}^5 \lambda_j F_j(p).$$
    \end{remark}

    Fix some $\{\lambda_j\geq 0:\sum_{j=1}^5 \lambda_j=1\}$, define
    \begin{align*}
        \mathcal{L}^T(U,V,X) :=&(\lambda_2+\lambda_3)I(X;Y_1)+(\lambda_1+\lambda_4)I(X;Y_2)\\
        &+\lambda_2 I(X;Y_2|U)-\lambda_2 I(X;Y_1|U)-(\lambda_1+\lambda_5)I(U;Y_1|X)\\
        &+\lambda_4 I(X;Y_1|V)-\lambda_4 I(X;Y_2|V)-(\lambda_3+\lambda_5)I(V;Y_2|X),
    \end{align*}
    such that $\mathcal{L}(U,V,X) = \sum_{j=1}^5 \lambda_j F_j-(\lambda_1+\lambda_5)C_1-(\lambda_3+\lambda_5)C_2$.

    Further define
    $$\mathcal V^T = \max_{\substack{\mathbb{E}[X^2]\leq P,\\U\rightarrow(X,Y_1)\rightarrow Y_2,\\V\rightarrow(X,Y_2)\rightarrow Y_1}} \mathcal{L}^T(U,V,X).$$
    It remains to prove that a jointly Gaussian maximizer achieves $\mathcal V^T$.

    We suppose that $(U_1,V_1,X_1)$ and $(U_2,V_2,X_2)$ are two independent copies of the maximizer $(U,V,X)^*$ to $\mathcal L^T(U,V,X)$. We can assume $X^*$ to be zero-mean. Define
    \begin{align*}
        X_{+}=&\frac{X_1+X_2}{\sqrt 2},\quad X_{-} =\frac{X_1-X_2}{\sqrt 2},\\
        Y_{1,+}=&\frac{Y_{1,1}+Y_{1,2}}{\sqrt 2},\quad Y_{1,-} =\frac{Y_{1,1}-Y_{1,2}}{\sqrt 2},\\
        Y_{2,+}=&\frac{Y_{2,1}+Y_{2,2}}{\sqrt 2},\quad Y_{2,-} =\frac{Y_{2,1}-Y_{2,2}}{\sqrt 2}.
    \end{align*}
    The channel is invariant after the above bijective transformation, where the additive noises
    \begin{align*}
        Z_{1,+}=&\frac{Z_{1,1}+Z_{1,2}}{\sqrt 2},\quad Z_{1,-} =\frac{Z_{1,1}-Z_{1,2}}{\sqrt 2},\\
        Z_{2,+}=&\frac{Z_{2,1}+Z_{2,2}}{\sqrt 2},\quad Z_{2,-} =\frac{Z_{2,1}-Z_{2,2}}{\sqrt 2},
    \end{align*}
    are independent and distributed as $\mathcal{N}(0,N_1),\mathcal{N}(0,N_2)$.

    Observe
    \begin{align*}
        2\mathcal{V}^T \overset{(a)}{=}& \mathcal{L}^{T}(U_1,V_1,X_1)+\mathcal{L}^{T}(U_2,V_2,X_2)\\
        \overset{(b)}{=}&\mathcal{L}^{T\otimes T}((U_1,U_2),(V_1,V_2),(X_1,X_2))\\
        =&(\lambda_2+\lambda_3)I(X_1,X_2;Y_{1,1},Y_{1,2})+(\lambda_1+\lambda_4)I(X_1,X_2;Y_{2,1},Y_{2,2})\\
        &+\lambda_2 I(X_1,X_2;Y_{2,1},Y_{2,2}|U_1,U_2)-\lambda_2 I(X_1,X_2;Y_{1,1},Y_{1,2}|U_1,U_2)-(\lambda_1+\lambda_5)I(U_1,U_2;Y_{1,1},Y_{1,2}|X_1,X_2)\\
        &+\lambda_4 I(X_1,X_2;Y_{1,1},Y_{1,2}|V_1,V_2)-\lambda_4 I(X_1,X_2;Y_{2,1},Y_{2,2}|V_1,V_2)-(\lambda_3+\lambda_5)I(V_1,V_2;Y_{2,1},Y_{2,2}|X_1,X_2)\\
        \overset{(c)}{=}&(\lambda_2+\lambda_3)I(X_+,X_-;Y_{1,+},Y_{1,-})+(\lambda_1+\lambda_4)I(X_+,X_-;Y_{2,+},Y_{2,-})\\
        &+\lambda_2 I(X_+,X_-;Y_{2,+},Y_{2,-}|U_1,U_2)-\lambda_2 I(X_+,X_-;Y_{1,+},Y_{1,-}|U_1,U_2)-(\lambda_1+\lambda_5)I(U_1,U_2;Y_{1,+},Y_{1,-}|X_+,X_-)\\
        &+\lambda_4 I(X_+,X_-;Y_{1,+},Y_{1,-}|V_1,V_2)-\lambda_4 I(X_+,X_-;Y_{2,+},Y_{2,-}|V_1,V_2)-(\lambda_3+\lambda_5)I(V_1,V_2;Y_{2,+},Y_{2,-}|X_+,X_-)\\
        \overset{(d)}{=}& (\lambda_2+\lambda_3)I(X_+;Y_{1,+})+(\lambda_1+\lambda_4)I(X_+;Y_{2,+})\\
        &+\lambda_2 (I(X_+;Y_{2,+}|U_1,U_2,Y_{1,-})-I(X_+;Y_{1,+}|U_1,U_2,Y_{1,-}))-(\lambda_1+\lambda_5)I(U_1,U_2,Y_{1,-};Y_{1,+}|X_+)\\
        &+\lambda_4 (I(X_+;Y_{1,+}|V_1,V_2,Y_{2,-})-I(X_+;Y_{2,+}|V_1,V_2,Y_{2,-}))-(\lambda_3+\lambda_5)I(V_1,V_2,Y_{2,-};Y_{2,+}|X_+)\\
        &+(\lambda_2+\lambda_3)I(X_-;Y_{1,-})+(\lambda_1+\lambda_4)I(X_-;Y_{2,-})\\
        &+\lambda_2(I(X_-;Y_{2,-}|U_1,U_2,Y_{2,+})-I(X_-;Y_{1,-}|U_1,U_2,Y_{2,+}))-(\lambda_1+\lambda_5)I(U_1,U_2,Y_{2,+};Y_{1,-}|X_-)\\
        &+\lambda_4(I(X_-;Y_{1,-}|V_1,V_2,Y_{1,+})-I(X_-;Y_{2,-}|V_1,V_2,Y_{1,+}))-(\lambda_3+\lambda_5)I(V_1,V_2,Y_{1,+};Y_{2,-}|X_-))\\
        &-(\lambda_2+\lambda_3)I(Y_{1,+};Y_{1,-})-(\lambda_1+\lambda_4)I(Y_{2,+};Y_{2,-})\\
        &-(\lambda_1+\lambda_2+\lambda_5)(I(Y_{1,+};X_-|X_+,U_1,U_2,Y_{1,-})+I(Y_{1,-};X_+|X_-,U_1,U_2,Y_{2,+}))\\
        &-(\lambda_3+\lambda_4+\lambda_5)(I(Y_{2,+};X_-|X_+,V_1,V_2,Y_{2,-})+I(Y_{2,-};X_+|X_-,V_1,V_2,Y_{1,+}))\\
        \overset{(e)}{\leq }& (\lambda_2+\lambda_3)I(X_+;Y_{1,+})+(\lambda_1+\lambda_4)I(X_+;Y_{2,+})\\
        &+\lambda_2 (I(X_+;Y_{2,+}|U_1,U_2,Y_{1,-})-I(X_+;Y_{1,+}|U_1,U_2,Y_{1,-}))-(\lambda_1+\lambda_5)I(U_1,U_2,Y_{1,-};Y_{1,+}|X_+)\\
        &+\lambda_4 (I(X_+;Y_{1,+}|V_1,V_2,Y_{2,-})-I(X_+;Y_{2,+}|V_1,V_2,Y_{2,-}))-(\lambda_3+\lambda_5)I(V_1,V_2,Y_{2,-};Y_{2,+}|X_+)\\
        &+(\lambda_2+\lambda_3)I(X_-;Y_{1,-})+(\lambda_1+\lambda_4)I(X_-;Y_{2,-})\\
        &+\lambda_2(I(X_-;Y_{2,-}|U_1,U_2,Y_{2,+})-I(X_-;Y_{1,-}|U_1,U_2,Y_{2,+}))-(\lambda_1+\lambda_5)I(U_1,U_2,Y_{2,+};Y_{1,-}|X_-)\\
        &+\lambda_4(I(X_-;Y_{1,-}|V_1,V_2,Y_{1,+})-I(X_-;Y_{2,-}|V_1,V_2,Y_{1,+}))-(\lambda_3+\lambda_5)I(V_1,V_2,Y_{1,+};Y_{2,-}|X_-))\\
        \overset{(f)}{=}&\mathcal{L}^{T}(U_+,V_+,X_+)+\mathcal{L}^{T}(U_-,V_-,X_-)\\
        \overset{(g)}{\leq} &2\mathcal{V}^T,
    \end{align*}
    where $(a)$ holds because $(U,V,X)^*$ achieves $V^T$; $(b)$ holds because two random variable tuples $(U_1,V_1,X_1)$ and $(U_2,V_2,X_2)$ are independent copies; $(c)$ follows since the mutual information terms are invariant under bijective transformation, whose proof can be found in \cite{gn14}; $(d)$ follows by a standard derivation; $(f)$ holds as we do the following substitution:
    \begin{align*}
        &U_+=(U_1,U_2,Y_{1,-}),\quad U_-=(U_1,U_2,Y_{2,+}),\\
        &V_+=(V_1,V_2,Y_{2,-}),\quad V_{-}=(V_1,V_2,Y_{1,+}).
    \end{align*}
    Step $(g)$ follows by the definition of $\mathcal{V}^T$, where the input power constraints are satisfied:
    $$\mathbb{E}[X_+^2]=\mathbb{E}[X_-^2]=\mathbb{E}[X^2]\leq P,$$
    and the Markov chain conditions are satisfied, as well:
    \begin{align*}
        &U_+\rightarrow (X_+,Y_{1,+})\rightarrow Y_{2,+},\quad V_+\rightarrow (X_+,Y_{2,+})\rightarrow Y_{1,+},\\
        &U_-\rightarrow (X_-,Y_{1,-})\rightarrow Y_{2,-},\quad V_-\rightarrow (X_-,Y_{2,-})\rightarrow Y_{1,-}.
    \end{align*}
    Since the first and last expressions in the above sequence of (in)equalities are equal, equality must hold in step~$(e)$. If $\lambda_2+\lambda_3>0$, the nonnegative term $(\lambda_2+\lambda_3)I(Y_{1,+};Y_{1,-})$ must vanish. Thus $Y_{1,+}$ and $Y_{1,-}$ are independent. By \cite[Proposition~2]{gn14}, this implies that $X_+$ and $X_-$ are independent. Since $X_1$ and $X_2$ are independent copies
and $X_\pm=\frac{X_1\pm X_2}{\sqrt{2}}$, the
Skitovich--Darmois theorem implies that $X$ is Gaussian
\cite{Darmois1953,Skitovich1953,GhuryeOlkin1962}.

If $\lambda_1+\lambda_4>0$, the same argument applies to
$I(Y_{2,+};Y_{2,-})$ and again shows that $X$ is Gaussian.
The only remaining case is
$\lambda_1+\lambda_4=\lambda_2+\lambda_3=0$, and $\lambda_5=1$.
In this case,
$$\mathcal L_\lambda(U,V,X)
=-I(U;Y_1\mid X)-I(V;Y_2\mid X)\leq 0,$$
and equality is attained by taking $U$ and $V$ to be constant
and $X$ to be Gaussian.
    
    We define the effective terms related to the variable $U$ as
    $$\mathcal{J}_U^T(U,X) := \lambda_2 I(X;Y_2|U)-\lambda_2 I(X;Y_1|U)-(\lambda_1+\lambda_5) I(U;Y_1|X).$$
    We aim to prove $P_{X,Y_1|U}$ being Gaussian attains $\max_{U\rightarrow (X,Y_1)\rightarrow Y_2}\mathcal{J}_U^T(U,X)$. Note that, when $\lambda_2 = 0$, the maximal can be attained by taking $U$ to be constant and $X$ to be Gaussian. Hence, we focus on when $\lambda_2>0$.

    For $\epsilon>0$, define a perturbed channel $T_\epsilon$, such that
    $$Y_1^\epsilon =\begin{bmatrix}
        Y_1\\\epsilon Z_1+W
    \end{bmatrix},\quad Y_2^\epsilon =\begin{bmatrix}
        Y_2\\\epsilon Z_1+W
    \end{bmatrix},$$
    where $Z_1,Z_2$ denote the channel's additive Gaussian noises, and $W\sim \mathcal{N}_1(0,1)$ is independent of all the other random variables. Define
    \begin{align*}
        \mathcal{J}_U^{\epsilon,T_\epsilon}(U,X) :=&\lambda_2 I(X;Y_2^\epsilon|U)-\lambda_2 I(X;Y_1^\epsilon|U)-(\lambda_1+\lambda_5) I(U;Y_1^\epsilon|X)-\lambda_2 \epsilon h(Y_2^\epsilon|U),\\
        \mathcal{V}_U:=&\max_{U\rightarrow (X,Y_1^\epsilon)\rightarrow Y_2^\epsilon}\mathcal{J}_U^{\epsilon,T_\epsilon} (U,X).
    \end{align*}
     We suppose $(U_1,X_1)$ and $(U_2,X_2)$ are two independent copies of the maximizer to $\mathcal{J}_U^{\epsilon,T_\epsilon} (U,X)$. Define 
     \begin{align*}
        &X_{+}=\frac{X_1+X_2}{\sqrt 2}, && X_{-} =\frac{X_1-X_2}{\sqrt 2},\\
        &Z_{1,+}=\frac{Z_{1,1}+Z_{1,2}}{\sqrt 2}, && Z_{1,-}=\frac{Z_{1,1}-Z_{1,2}}{\sqrt 2},\\
        &Z_{2,+}=\frac{Z_{2,1}+Z_{2,2}}{\sqrt 2},&& Z_{2,-}=\frac{Z_{2,1}-Z_{2,2}}{\sqrt 2},\\
        &W_{+}=\frac{W_1+W_2}{\sqrt 2}, && W_{-} =\frac{W_1-W_2}{\sqrt 2},\\
        &Y_{1,+}^\epsilon=\frac{Y_{1,1}^\epsilon+Y_{1,2}^\epsilon}{\sqrt 2},&& Y_{1,-}^\epsilon =\frac{Y_{1,1}^\epsilon-Y_{1,2}^\epsilon}{\sqrt 2},\\
        &Y_{2,+}^\epsilon=\frac{Y_{2,1}^\epsilon+Y_{2,2}^\epsilon}{\sqrt 2}, && Y_{2,-}^\epsilon =\frac{Y_{2,1}^\epsilon-Y_{2,2}^\epsilon}{\sqrt 2}.
    \end{align*}
    Note that the channel $T_\epsilon$ is again invariant.
     
     Applying a similar argument as the above,
     \begin{align*}
        2 \mathcal{V}_U=& \mathcal{J}_U^{\epsilon,T_\epsilon}(U_1,X_1)+\mathcal{J}_U^{\epsilon,T_\epsilon}(U_2,X_2)\\
        =&\mathcal{J}_U^{\epsilon,T_\epsilon\otimes T_\epsilon}(U_1,U_2,X_1,X_2)\\
        =&\lambda_2 I(X_1,X_2;Y_{2,1}^\epsilon,Y_{2,2}^\epsilon|U_1,U_2)-\lambda_2 I(X_1,X_2;Y_{1,1}^\epsilon,Y_{1,2}^\epsilon|U_1,U_2)\\
        &-(\lambda_1+\lambda_5) I(U_1,U_2;Y_{1,1}^\epsilon,Y_{1,2}^\epsilon|X_1,X_2)-\lambda_2 \epsilon h(Y_{2,1}^\epsilon,Y_{2,2}^\epsilon|U_1,U_2)\\
         =&\lambda_2 I(X_+,X_-;Y_{2,+}^\epsilon,Y_{2,-}^\epsilon|U_1,U_2)-\lambda_2 I(X_+,X_-;Y_{1,+}^\epsilon,Y_{1,-}^\epsilon|U_1,U_2)\\
         &-(\lambda_1+\lambda_5) I(U_1,U_2;Y_{1,+}^\epsilon,Y_{1,-}^\epsilon|X_+,X_-)-\lambda_2 \epsilon h(Y_{2,+}^\epsilon,Y_{2,-}^\epsilon|U_1,U_2)\\
         =&\lambda_2 I(X_+;Y_{2,+}^\epsilon|U_1,U_2,Y_{1,-}^\epsilon)-\lambda_2 I(X_+;Y_{1,+}^\epsilon|U_1,U_2,Y_{1,-}^\epsilon)\\
         &-(\lambda_1+\lambda_5) I(U_1,U_2,Y_{1,-}^\epsilon;Y_{1,+}^\epsilon|X_+)-\lambda_2 \epsilon h(Y_{2,+}^\epsilon|U_1,U_2,Y_{1,-}^\epsilon)\\
         &+\lambda_2 I(X_-;Y_{2,-}^\epsilon|U_1,U_2,Y_{2,+}^\epsilon)-\lambda_2 I(X_-;Y_{1,-}^\epsilon|U_1,U_2,Y_{2,+}^\epsilon)\\
         &-(\lambda_1+\lambda_5) I(U_1,U_2,Y_{2,+}^\epsilon;Y_{1,-}^\epsilon|X_-)-\lambda_2 \epsilon h(Y_{2,-}^\epsilon|U_1,U_2,Y_{2,+}^\epsilon)\\
         &-(\lambda_1+\lambda_2+\lambda_5)(I(X_-;Y_{1,+}^\epsilon|X_+,U_1,U_2,Y_{1,-}^\epsilon)+I(X_+;Y_{1,-}^\epsilon|X_-,U_1,U_2,Y_{2,+}^\epsilon))\\
         &-\lambda_2\epsilon I(Y_{1,-}^\epsilon;Y_{2,+}^\epsilon|U_1,U_2)\\
         \leq &\lambda_2 I(X_+;Y_{2,+}^\epsilon|U_1,U_2,Y_{1,-}^\epsilon)-\lambda_2 I(X_+;Y_{1,+}^\epsilon|U_1,U_2,Y_{1,-}^\epsilon)\\
         &-(\lambda_1+\lambda_5) I(U_1,U_2,Y_{1,-}^\epsilon;Y_{1,+}^\epsilon|X_+)-\lambda_2 \epsilon h(Y_{2,+}^\epsilon|U_1,U_2,Y_{1,-}^\epsilon)\\
         &+\lambda_2 I(X_-;Y_{2,-}^\epsilon|U_1,U_2,Y_{2,+}^\epsilon)-\lambda_2 I(X_-;Y_{1,-}^\epsilon|U_1,U_2,Y_{2,+}^\epsilon)\\
         &-(\lambda_1+\lambda_5) I(U_1,U_2,Y_{2,+}^\epsilon;Y_{1,-}^\epsilon|X_-)-\lambda_2 \epsilon h(Y_{2,-}^\epsilon|U_1,U_2,Y_{2,+}^\epsilon)\\
         =& \mathcal{J}_U^{\epsilon,T_\epsilon}(U_+,X_+)+\mathcal{J}_U^{\epsilon,T_\epsilon}(U_-,X_-)\\
         \leq & 2 \mathcal{V}_U,
     \end{align*}
     where we do the following substitution:
     $$U_+=(U_1,U_2,Y_{1,-}^\epsilon),\quad U_-=(U_1,U_2,Y_{2,+}^\epsilon).$$
     By squeezing, we have 
    \begin{align*}
        0=&I(Y_{1,-}^\epsilon;Y_{2,+}^\epsilon|U_1,U_2)\\
        =&I(\begin{pmatrix}
            1 & 1\\ 0 &\epsilon
        \end{pmatrix}\begin{pmatrix}
            X_-\\Z_{1,-}
        \end{pmatrix}+\begin{pmatrix}
            0 \\W_-
        \end{pmatrix};\begin{pmatrix}
            1 & 0\\ 0 &\epsilon
        \end{pmatrix} \begin{pmatrix}
            X_+\\Z_{1,+}
        \end{pmatrix}+\begin{pmatrix}
            Z_{2,+}\\W_+
        \end{pmatrix} |U_1,U_2)\\
        =&I(\begin{pmatrix}
            X_-\\Z_{1,-}
        \end{pmatrix};\begin{pmatrix}
            X_+\\Z_{1,+}
        \end{pmatrix}|U_1,U_2).
    \end{align*}
    Again, Skitovich-Darmois theorem\cite{Darmois1953,Skitovich1953,GhuryeOlkin1962} therefore implies that $P_{X,Z_1|U}$ is Gaussian. By a limiting argument, we further conclude that the optimizer $P_{X,Y_1|U}$ is Gaussian for the original channel. A symmetric argument works for proving the optimizer $P_{X,Y_2|V}$ is Gaussian.

\bibliographystyle{IEEEtran}
\bibliography{ref}

@inproceedings{aoe25,
  title={Neural compress-and-forward for the primitive diamond relay channel},
  author={Ayg{\"u}n, Ozan and {\"O}zyilkan, Ezgi and Erkip, Elza},
  booktitle={2025 59th Asilomar Conference on Signals, Systems, and Computers},
  pages={754--758},
  year={2025},
  organization={IEEE}
}

@INPROCEEDINGS{gdy25,
  author={Guo, Yuanxin and Draper, Stark and Yu, Wei},
  booktitle={IEEE International Symposium on Information Theory (ISIT)}, 
  title={Modulo Quantization Coding for Gaussian Primitive Relay Channel With Perfectly Correlated Noises}, 
  year={2025},
  volume={},
  number={},
  pages={1-6},
  doi={10.1109/ISIT63088.2025.11195654}}

@unknown{gdy26,
author = {Guo, Yuanxin and Draper, Stark and Yu, Wei},
year = {2026},
month = {06},
pages = {},
title = {Modulo Quantization Coding for Primitive Relay and Diamond Channels with Correlated Noises},
doi = {10.48550/arXiv.2606.22313}
}

@INPROCEEDINGS{geng25,
  author={Geng, Yanlin},
  booktitle={2025 IEEE International Symposium on Information Theory (ISIT)}, 
  title={A Simple Auxiliary Receiver Outer Bound for Broadcast Channels}, 
  year={2025},
  volume={},
  number={},
  pages={1-6},
  doi={10.1109/ISIT63088.2025.11195345}}

@INPROCEEDINGS{gzn24,
  author={Gohari, Amin and Nair, Chandra and Zhao, Jinpei},
  booktitle={2024 IEEE International Symposium on Information Theory (ISIT)}, 
  title={On the capacity region of some classes of interference channels}, 
  year={2024},
  volume={},
  number={},
  pages={3136-3141},
  doi={10.1109/ISIT57864.2024.10619605}}

@INPROCEEDINGS{gln26,
  author={Gohari, Amin and Liu, Yi and Nair, Chandra},
  booktitle={IEEE International Symposium on Information Theory (ISIT)}, 
  title={A Two Auxiliary Receiver Outer Bound to the Capacity Region of a Two-Receiver Discrete Memoryless Broadcast Channel}, 
  year={2026},
  volume={},
  number={},
  pages={1-6},
  doi={10.1109/ISIT62367.2026.11653844}}

@article{SanderovichShamaiSteinbergKramer2008,
  author  = {Amichai Sanderovich and Shlomo Shamai (Shitz) and
             Yossef Steinberg and Gerhard Kramer},
  title   = {Communication via Decentralized Processing},
  journal = {IEEE Transactions on Information Theory},
  volume  = {54},
  number  = {7},
  pages   = {3008--3023},
  month   = jul,
  year    = {2008},
  doi     = {10.1109/TIT.2008.924659}
}

@article{SanderovichShamaiSteinberg2009,
  author  = {Amichai Sanderovich and Shlomo Shamai (Shitz) and
             Yossef Steinberg},
  title   = {Distributed {MIMO} Receiver---Achievable Rates
             and Upper Bounds},
  journal = {IEEE Transactions on Information Theory},
  volume  = {55},
  number  = {10},
  pages   = {4419--4438},
  month   = oct,
  year    = {2009}
}

@article{SanderovichSomekhPoorShamai2009,
  author  = {Amichai Sanderovich and Oren Somekh and
             H. Vincent Poor and Shlomo Shamai (Shitz)},
  title   = {Uplink Macro Diversity of Limited Backhaul
             Cellular Network},
  journal = {IEEE Transactions on Information Theory},
  volume  = {55},
  number  = {8},
  pages   = {3457--3478},
  month   = aug,
  year    = {2009}
}

@INPROCEEDINGS{wbo17,
  author={Wu, Xiugang and Barnes, Leighton Pate and Özgür, Ayfer},
  booktitle={2017 IEEE International Symposium on Information Theory (ISIT)}, 
  title={The geometry of the relay channel}, 
  year={2017},
  volume={},
  number={},
  pages={2233-2237},
  doi={10.1109/ISIT.2017.8006926}}

@article{bwo23,
author = {Bai, Yikun and Wu, Xiugang and Özgür, Ayfer},
year = {2023},
month = {04},
pages = {2059-2073},
title = {Information Constrained Optimal Transport: From Talagrand, to {M}arton, to {C}over},
volume = {69},
journal = {IEEE Transactions on Information Theory},
doi = {10.1109/TIT.2023.3237073}
}

@ARTICLE{wbo19,
  author={Wu, Xiugang and Barnes, Leighton Pate and Özgür, Ayfer},
  journal={IEEE Transactions on Information Theory}, 
  title={“{T}he Capacity of the Relay Channel”: Solution to {C}over's Problem in the {G}aussian Case}, 
  year={2019},
  volume={65},
  number={1},
  pages={255-275},
  doi={10.1109/TIT.2018.2876892}}

@article{CoverElGamal1979,
  author  = {Thomas M. Cover and Abbas A. El Gamal},
  title   = {Capacity Theorems for the Relay Channel},
  journal = {IEEE Transactions on Information Theory},
  volume  = {25},
  number  = {5},
  pages   = {572--584},
  month   = sep,
  year    = {1979},
  doi     = {10.1109/TIT.1979.1055985}
}

@article{LimKimElGamalChung2011NNC,
  author  = {Sung Hoon Lim and Young-Han Kim and
             Abbas El Gamal and Sae-Young Chung},
  title   = {Noisy Network Coding},
  journal = {IEEE Transactions on Information Theory},
  volume  = {57},
  number  = {5},
  pages   = {3132--3152},
  month   = may,
  year    = {2011},
  doi     = {10.1109/TIT.2011.2119930}
}

@ARTICLE{GK26,
  author={Gohari, Amin and Kramer, Gerhard},
  journal={IEEE Transactions on Information Theory}, 
  title={Dependence Balance and Capacity Bounds for Multiterminal Communication and Wiretap Channels}, 
  year={2026},
  volume={},
  number={},
  pages={1-1},
  doi={10.1109/TIT.2026.3707564}}

@book{bunt1934bijdrage,
  title={Bijdrage tot de theorie der convexe puntverzamelingen},
  author={Bunt, Lucas Nicolaas Hendrik}, 
  year={1934},
  publisher={Verlag nicht ermittelbar}
}

@ARTICLE{ggny14,
  author={Geng, Yanlin and Gohari, Amin and Nair, Chandra and Yu, Yuanming},
  journal={IEEE Transactions on Information Theory}, 
  title={On {M}arton's Inner Bound and Its Optimality for Classes of Product Broadcast Channels}, 
  year={2014},
  volume={60},
  number={1},
  pages={22-41},
  doi={10.1109/TIT.2013.2285925}}

@article{GhuryeOlkin1962,
  author  = {Ghurye, S. G. and Olkin, Ingram},
  title   = {A Characterization of the Multivariate Normal Distribution},
  journal = {The Annals of Mathematical Statistics},
  volume  = {33},
  number  = {2},
  pages   = {533--541},
  month   = jun,
  year    = {1962},
  doi     = {10.1214/aoms/1177704579}
}

@article{Darmois1953,
  author  = {Darmois, Georges},
  title   = {Analyse g{\'e}n{\'e}rale des liaisons stochastiques},
  journal = {Revue de l'Institut International de Statistique},
  volume  = {21},
  pages   = {2--8},
  year    = {1953}
}

@article{Skitovich1953,
  author  = {Skitovich, V. P.},
  title   = {On a Property of the Normal Distribution},
  journal = {Doklady Akademii Nauk SSSR},
  volume  = {89},
  pages   = {217--219},
  year    = {1953},
  note    = {In Russian}
}

@ARTICLE{gn14,
  author={Geng, Yanlin and Nair, Chandra},
  journal={IEEE Transactions on Information Theory}, 
  title={The Capacity Region of the Two-Receiver {G}aussian Vector Broadcast Channel With Private and Common Messages}, 
  year={2014},
  volume={60},
  number={4},
  pages={2087-2104},
  doi={10.1109/TIT.2014.2304457}}

@INPROCEEDINGS{egn21,
  author={El Gamal, Abbas and Gohari, Amin and Nair, Chandra},
  booktitle={IEEE International Symposium on Information Theory (ISIT)}, 
  title={Strengthened Cutset Upper Bound on the Capacity of the Relay Channel and Applications}, 
  year={2021},
  volume={},
  number={},
  pages={1344-1349},
  doi={10.1109/ISIT45174.2021.9517975}}

@article{gohari2021outer,
  title={Outer bounds for multiuser settings: The auxiliary receiver approach},
  author={Gohari, Amin and Nair, Chandra},
  journal={IEEE Transactions on Information Theory},
  volume={68},
  number={2},
  pages={701--736},
  year={2021},
  publisher={IEEE}
}

@article{bidokhti2016capacity,
  title={Capacity bounds for diamond networks with an orthogonal broadcast channel},
  author={Bidokhti, Shirin Saeedi and Kramer, Gerhard},
  journal={IEEE Transactions on Information Theory},
  volume={62},
  number={12},
  pages={7103--7122},
  year={2016},
  publisher={IEEE}
}

@article{kang2015gaussian,
  title={The {G}aussian multiple access diamond channel},
  author={Kang, Wei and Liu, Nan and Chong, Weiwei},
  journal={IEEE Transactions on Information Theory},
  volume={61},
  number={11},
  pages={6049--6059},
  year={2015},
  publisher={IEEE}
}

@inproceedings{wu2019new,
  title={New upper bounds on the capacity of primitive diamond relay channels},
  author={Wu, Xiugang and Ozgur, Ayfer and Peleg, Michael and Shitz, Shlomo Shamai},
  booktitle={2019 IEEE Information Theory Workshop (ITW)},
  pages={1--5},
  year={2019},
  organization={IEEE}
}

@phdthesis{schein2001distributed,
  title={Distributed coordination in network information theory},
  author={Schein, Brett Eric},
  year={2001},
  school={Massachusetts Institute of Technology}
}

@article{aguerri2019capacity,
  title={On the capacity of cloud radio access networks with oblivious relaying},
  author={Aguerri, Inaki Estella and Zaidi, Abdellatif and Caire, Giuseppe and Shamai, Shlomo},
  journal={IEEE Transactions on Information Theory},
  volume={65},
  number={7},
  pages={4575--4596},
  year={2019},
  publisher={IEEE}
}

@article{liu2026nonasymptotic,
  title={Nonasymptotic Oblivious Relaying and Variable-Length Noisy Lossy Source Coding},
  author={Liu, Yanxiao and Advary, Sepehr Heidari and Li, Cheuk Ting}, 
  journal={IEEE Transactions on Information Theory},
  volume={72},
  number={6},
  pages={4555--4564},
  year={2026},
  publisher={IEEE}
}

@article{park2014fronthaul,
  title={Fronthaul compression for cloud radio access networks: Signal processing advances inspired by network information theory},
  author={Park, Seok-Hwan and Simeone, Osvaldo and Sahin, Onur and Shitz, Shlomo Shamai},
  journal={IEEE Signal Processing Magazine},
  volume={31},
  number={6},
  pages={69--79},
  year={2014},
  publisher={IEEE}
}

@article{wu2015cloud,
  title={Cloud radio access network ({C-RAN}): a primer},
  author={Wu, Jun and Zhang, Zhifeng and Hong, Yu and Wen, Yonggang},
  journal={IEEE Network},
  volume={29},
  number={1},
  pages={35--41},
  year={2015},
  publisher={IEEE}
}

\end{document}